\title{\vspace{-2 cm} Magnetic ring chains with vertex coupling \\ of a preferred orientation}
\newcommand{\e}{\mathrm{e}}
\newcommand{\D}{\mathrm{d}}
\DeclareMathOperator\arctanh{arctanh}
\newtheorem{thm}{Theorem}
\author{Marzieh Baradaran$^1$, Pavel Exner$^{2,3}$, and Ji\v{r}\'{\i} Lipovsk\'{y}$^1$}
\date{\small 1) Department of Physics, Faculty of Science, University of Hradec Kr\'alov\'{e}, Rokitansk\'eho 62, 500 03 Hradec Kr\'alov\'{e}, Czechia \\
2) Doppler Institute for Mathematical Physics and Applied Mathematics, Czech Technical University, B\v rehov\'a 7, 11519 Prague, Czechia \\
3) Department of Theoretical Physics, Nuclear Physics Institute, Czech Academy of Sciences, 25068 \v{R}e\v{z} near Prague, Czechia \\
\emph{marzie.baradaran@yahoo.com, exner@ujf.cas.cz, jiri.lipovsky@uhk.cz}}
\begin{document}

\captionsetup[figure]{labelfont={bf},labelformat={default},labelsep=period,name={Fig.}}
\maketitle

\begin{abstract}
We discuss spectral properties of a periodic quantum graph consisting of an array of rings coupled either tightly or loosely through connecting links, assuming that the vertex coupling is manifestly non-invariant with respect to the time reversal and a homogeneous magnetic field perpendicular to the graph plane is present. It is shown that the vertex parity determines the spectral behavior at high energies and the Band-Berkolaiko universality holds whenever the edges are incommensurate. The magnetic field influences the probability that an energy belongs to the spectrum in the tight-chain case, and also it can turn some spectral bands into infinitely degenerate eigenvalues.
\end{abstract}

\section{Introduction}
\label{sect:Intro}

The class of models usually called `quantum graphs' offers both many interesting mathematical problems and a number of applications in physics; for a survey and bibliography see \cite{BK13, EKKST}. To define a self-adjoint Laplacian, or more generally a Schr\"odinger operator, on a metric graph one has to specify the conditions matching the functions of its domain in the graph vertices. This can be done in many different ways \cite[Thm.~1.4.4]{BK13}, but in most cases the simple condition known as the $\delta$-coupling is used, the only one consistent with the continuity at the vertices, frequently its simplest version called Kirchhoff (alternatively free, standard, etc.). It is true that every self-adjoint coupling can be given meaning in terms of scaled families of magnetic Schr\"odinger operator on Neumann network built over the graph \cite{EP13}, but this result has primarily an existence meaning, in applications one rather chooses the vertex matching conditions \emph{ad hoc} to fit the nature of the physical problem they have to describe.

A class of interest here are conditions giving rise to a dynamics violating the time-reversal invariance. A motivation to investigate such couplings came from an attempt to use quantum graphs to model the anomalous Hall effect
\cite{SK15} through lattices with the $\delta$-coupling at the nodes. To achieve the goal, the model required a restriction on the family of acceptable states, hard to justify from the first principles. To get a preferred orientation of the motion at the vertices, it is more natural to choose an appropriate vertex coupling, different from the $\delta$. A simple example was proposed in \cite{ET18} and studied further in \cite{BE21, BET20, BET21, EL19}; it was noted that such a model has interesting property, namely that its spectral and scattering property at high energies depend on the parity of the vertices involved. Moreover, it was made clear recently \cite{ET21} that the root of this behavior is in the presence/absence of the Dirichlet component of the vertex coupling \cite[Thm.~1.4.4]{BK13}, and that while such graphs violate the time-reversal invariance, they preserve the $\mathcal{PT}$-symmetry.

The aim of the present paper is to revisit the topic of \cite{BET20, BET21} and to investigate what happens if the system is placed into a homogenous magnetic field. The interest to this problem comes from the fact that we have here two different effects violating the time-reversal invariance which can cooperate or compete. Specifically, we consider a periodic chain graph consisting of an infinite array of rings connected either tightly, or loosely through connecting links, assuming the coupling condition proposed in \cite{ET18} at the vertices. We suppose that the particle, the motion of which is confined to such a graph, is charged and exposed to a magnetic field of intensity $B$ perpendicular to the graph plane; the corresponding vector potential $\mathbf{A}$ can be chosen to be tangent to each ring and constant there with the modulus $A=\tfrac{1}{2\pi}\Phi$, where $\Phi$ is the magnetic flux through the loop (without loss of generality we may suppose that the rings are of unit radius).

The spectrum of such chains has a band-gap character, and as expected, its properties depend strongly on the vertex parity. In the generic `loose' chain with vertices of degree three the spectrum is dominated by gaps at high energies. There are two degenerate cases where the vertices are of degree four and the bands cover a more significant part of the spectrum. A quantity characterizing their role is the probability that a randomly chosen positive number belongs to the spectrum, defined by \eqref{probsigma} below, which is nonzero in the degenerate case. What is important, we prove that Band-Berkolaiko universality \cite{BB13} holds in this case, that is, the said probability value is independent of the graph edge lengths as long as they are incommensurate. On the other hand, we show that in the degenerate case the said probability depends generically on the magnetic field. Another field influence concerns the character of spectral bands, it may turn some of them into flat ones, i.e. infinitely degenerate eigenvalues. The effect is not that dramatic as for a tight chain with the $\delta$-coupling where a field with the flux half-integer in the natural ($2\pi$) units can make the spectrum pure point \cite[Thm.~2.1]{EM17}, but it is present and half-integer values of $A$ play again an important role.

\section{The model}
\label{sect:GenMod}

In the presence of a magnetic field, the Hamiltonian describing a particle of unit charge living on the graph acts as the magnetic Laplacian, $\left(-i\nabla-\mathbf{A}\right)^2$, provided we use the rational system of units, $\hbar=2m=1$. The motion on the graph edges is one-dimensional, which means that we replace the usual Laplacian acting on the $j$th edge as $\psi_{j}\mapsto -\psi_{j}''$ by $\psi_{j}\mapsto -\mathcal{D}^{2}\psi_{j}$, where
\begin{equation}\label{quasi-derivative}
   \mathcal{D}:=\frac{\D}{\D x}- i\, A_{j},
\end{equation}
the particle charge is set to $-1$, and $A_{j}$ is the tangent component of the magnetic vector potential on the $j$th edge. The domain of the operator consists of the functions such that $\mathcal{D}^{2}\psi_{j}\in L^2$ on each edge and at the vertices the functions are matched by conditions ensuring the self-adjointness \cite{KS03}. This is a large family -- for a vertex of degree $N$ it depends on $N^2$ parameters -- out of which we select the one-parameter subfamily \cite{ET18} of couplings violating time-reversal symmetry, namely
\begin{equation}\label{coupB}
(\psi_{j+1}-\psi_{j})+i\ell \left(\mathcal{D}\psi_{j+1}+\mathcal{D}\psi_{j} \right)=0,\quad\; j=1,\dots,N,
\end{equation}
where, with an abuse of notations, we use the symbols $\psi_j$ for the boundary value of the function $\psi_j$ at the vertex, and similarly $\mathcal{D}\psi_{j}$ for the boundary value the function $\mathcal{D}\psi_{j}$  with the
derivative taken in the outward direction; the parameter $\ell$ is a positive number fixing the length scale.

The graph we are interested in is shown in Fig.~\ref{fig1}.
\begin{figure}[h]
\centering
\includegraphics[scale=1]{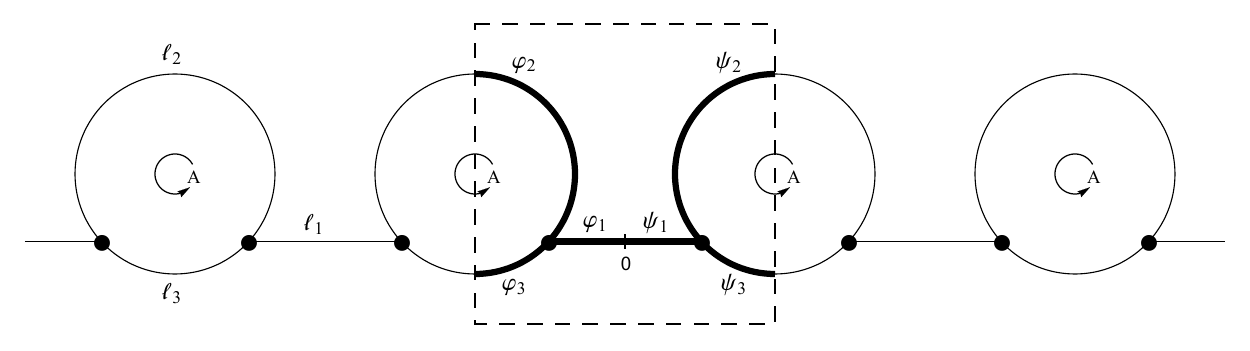}
\caption{ An elementary cell of the ring chain graph}
\label{fig1}
\end{figure}
In general, we suppose that all the lengths involved are nonzero, $\ell_j>0$, $j=1,2,3$, and consequently, the vertices are of degree three; degenerate situations where one of the three edges shrinks to zero will be considered separately in the next sections. Furthermore, the presence of the parameter $\ell$ in the conditions \eqref{coupB} allows us to assume, without loss of generality, that the ring circumference is fixed by assumption being equal, say, to $\ell_{2}+\ell_{3}=2\pi$; results for the other values will follow by a simple scaling transformation. Note also that while we think of the graph loops as of circular rings, their shape does not really matter; the important quantities are the loop perimeter and the magnetic flux threading the loop. We are interested in the periodic situation where the coupling \eqref{coupB} at all the vertices is the same and applied magnetic field is homogeneous. The corresponding vector potential $A$ can be chosen in different ways, for the sake of definiteness we fix the gauge so that the tangent component of $A$  is constant on each graph edge. By Stokes theorem the loop integral of $A$ is the flux, the quantum of which in the natural units equals $2\pi$. Moreover, the freedom of choosing the gauge allows us to put $A_j=0$ on the links connecting the loops so that $\mathcal{D}$ there is just the usual derivative.

By the periodicity hypothesis, the spectral analysis can be performed using the Floquet-Bloch decomposition \cite[Chap.~4]{BK13}. To this aim, we consider an elementary cell of the graph as indicated in Fig.~\ref{fig1}. To parametrize the edges we choose coordinates to increase always `from left to right', in which case the constant components of $A_j$ in \eqref{quasi-derivative} will have opposite signs on the upper and lower arcs. For definiteness again we choose the value $-A$ on the upper arcs and $+A$ on the lower ones; the magnetic Laplacian operator acts then as $-\mathcal{D}_{\pm}^{2}:=-\big( \frac{d}{dx}\pm i A \big)^{2}$ with the upper and lower signs corresponding to the upper and lower arcs, respectively.

The main question we are going to address here concerns the spectrum of the described quantum graph Hamiltonian. The fiber operators in the decomposition have a purely discrete spectrum with the eigenfunctions which at the  energy $k^2$ are linear combination of the functions $\e^{\pm ikx}$, $\e^{-iAx}e^{\pm ikx}$, and $\e^{iAx}e^{\pm ikx}$ at the connecting edges, the upper and lower arcs, respectively. In other words, we employ the following Ansatz,
\begin{align}\label{ansatzB}
&\psi_{1}(x)=a_{1}^{+}\e^{ikx}+a_{1}^{-}\e^{-ikx},                          & x&\in[0,\tfrac{1}{2}\ell_{1}]  , \nonumber \\
&\psi_{2}(x)=\big(a_{2}^{+}\e^{ikx}+a_{2}^{-}\e^{-ikx}\big)\e^{-iAx},        & x&\in[0,\tfrac{1}{2}\ell_{2}]  , \nonumber \\
&\psi_{3}(x)=\big(a_{3}^{+}\e^{ikx}+a_{3}^{-}\e^{-ikx}\big)\e^{iAx},      & x&\in[0,\tfrac{1}{2}\ell_{3}]  ,\nonumber \\[7pt]
&\varphi_{1}(x)=b_{1}^{+}\e^{ikx}+b_{1}^{-}\e^{-ikx},                       & x&\in[-\tfrac{1}{2}\ell_{1},0] , \\
&\varphi_{2}(x)=\big(b_{2}^{+}\e^{ikx}+b_{2}^{-}\e^{-ikx}\big)\e^{-iAx},  & x&\in[-\tfrac{1}{2}\ell_{2},0] , \nonumber \\
&\varphi_{3}(x)=\big(b_{3}^{+}\e^{ikx}+b_{3}^{-}\e^{-ikx}\big)\e^{iAx},   & x&\in[-\tfrac{1}{2}\ell_{3},0] . \nonumber
\end{align}
At the midpoint of the connecting link, the functions $\psi_1$ and $\varphi_1$ have to be matched smoothly; together with the conditions imposed by the Floquet-Bloch decomposition at the loose ends of the cell graph, we have
\begin{align}\label{FloqB}
\psi _2\big(\tfrac{1}{2}\ell_{2}\big)&=\e^{i \theta }\, \varphi _2\big(\!-\!\tfrac{1}{2}\ell_{2}\big),
& \mathcal{D}_{+}\psi _2\big(\tfrac{1}{2}\ell_{2}\big)&=\e^{i \theta } \,\mathcal{D}_{+}\varphi _2\big(\!-\!\tfrac{1}{2}\ell_{2}\big),\nonumber \\
\psi _3\big(\tfrac{1}{2}\ell_{3}\big)&=\e^{i \theta } \,\varphi _3\big(\!-\!\tfrac{1}{2}\ell_{3}\big),
& \mathcal{D}_{-}\psi _3\big(\tfrac{1}{2}\ell_{3}\big)&=\e^{i \theta }\, \mathcal{D}_{-}\varphi _3\big(\!-\!\tfrac{1}{2}\ell_{3}\big),  \\
\psi _1(0)&=\varphi _1(0),       &  \psi _1'(0)&=\varphi _1'(0).  \nonumber
\end{align}
Imposing then the matching conditions \eqref{coupB} at the two vertices of the cell, keeping in mind that the derivatives are taken in the outward direction, we arrive at the following set of equations
\begin{align}\label{conditionB}
& \psi _3(0)-\psi _1\left(\tfrac{1}{2}\ell_{1}\right)  +i \ell  \left(\mathcal{D}_{-}\psi _3(0)-\psi _1'\left(\tfrac{1}{2}\ell_{1}\right)\right)=0,\nonumber \\
& \psi _2(0)-\psi _3(0)  +i \ell  \left(\mathcal{D}_{+}\psi _2(0)+\mathcal{D}_{-}\psi _3(0)\right)=0,\nonumber \\
& \psi _1\left(\tfrac{1}{2}\ell_{1}\right)-\psi _2(0) +i \ell  \left(\mathcal{D}_{+}\psi _2(0)-\psi _1'\left(\tfrac{1}{2}\ell_{1}\right)\right)=0,\\[7pt]
& \varphi _2(0)-\varphi _1\left(-\tfrac{1}{2}\ell_{1}\right) +i \ell  \left(\varphi _1'\left(-\tfrac{1}{2}\ell_{1}\right)-\mathcal{D}_{+}\varphi _2(0)\right)=0,\nonumber \\
& \varphi _3(0)-\varphi _2(0)-i \ell  \left(\mathcal{D}_{+}\varphi _2(0)+\mathcal{D}_{-}\varphi _3(0)\right)=0,\nonumber \\
& \varphi _1\left(-\tfrac{1}{2}\ell_{1}\right)-\varphi _3(0) +i \ell  \left(\varphi _1'\left(-\tfrac{1}{2}\ell_{1}\right)-\mathcal{D}_{-}\varphi _3(0)\right)=0.\nonumber
\end{align}
Substituting from \eqref{ansatzB} into \eqref{FloqB} allows us to eliminate the coefficients $a_j^{\pm}$ expressing them in terms of $b_j^{\pm}$, $j=1,2,3$; taking then \eqref{conditionB} into account, we get a system of six linear equations which is solvable provided its determinant vanishes; this yields the spectral condition
\begin{align}
& - \left(2 k \ell \, \sin  A \ell _2 \, \cos  A \ell _3 +\left(k^2 \ell ^2+1\right) \sin  k \ell _2  \,\cos  k \ell _3 \right) \cos k \ell_1 \\ \nonumber
&+\frac{1}{2} \sin  k \ell _1  \left(\left(k^4 \ell ^4+3\right) \sin  k \ell _2  \,\sin  k \ell _3 -2 \left(k^2 \ell ^2+1\right) \sin  A \ell _2  \,\sin  A \ell _3 \right) \\ \nonumber
&+ \left(\left(k^2 \ell ^2+1\right) \cos  A \ell _3  \,\sin  k \ell _1  -2 k \ell  \sin  A \ell _3  \,\cos  k \ell _1 \right)\cos  A \ell _2 -\left(k^2 \ell ^2+1\right)\cos  k \ell _2  \,\sin k \left(\ell _1+\ell _3\right)   \\ \nonumber
&+\cos \theta  \left(\left(k^2 \ell ^2+1\right) \left(\cos  A \ell _3  \,\sin  k \ell _2 +\cos  A \ell _2  \,\sin  k \ell _3 \right)+2 k \ell  \,\sin A \ell _2  \,\cos  k \ell _3 +2 k \ell  \,\sin  A \ell _3  \,\cos  k \ell _2 \right)  \\ \nonumber
&+\sin \theta  \left(\left(k^2 \ell ^2+1\right) \left(\sin A \ell _3  \,\sin  k \ell _2 -\sin  A \ell _2 \, \sin  k \ell _3 \right)-2 k \ell \, \cos A \ell _3  \,\cos  k \ell _2 +2 k \ell  \cos  A \ell _2  \,\cos  k \ell _3 \right) \\\nonumber
&=0,
\end{align}
which can be rewritten in a simpler form
\small
\begin{align}\label{SC,gen,ell-1,3}
& \,  8 \cos \theta  \left((k \ell +1)^2 \sin(A+k)\pi \;\cos  (A-k)(\pi -\ell _3) -(k \ell -1)^2 \sin (A-k)\pi \; \cos  (A+k)(\pi -\ell _3) \right) \notag \\
+& \, 8 \sin \theta  \left((k \ell -1)^2 \sin (A-k)\pi \;\sin (A+k)(\pi -\ell _3)-(k \ell +1)^2 \sin(A+k)\pi \;\sin (A-k)(\pi -\ell _3)\right)  \notag \\
+& \, (k \ell -1)^2 \left(4 \sin \left(2 \pi  A+k \ell _1\right)+(k \ell +1)^2 \left(\sin  k \left(2 \pi -\ell _1\right) +2 \sin  k \ell _1 \;\cos 2 k \left(\pi -\ell _3\right) \right)\right)    \notag \\
-& \, 4 (k \ell +1)^2 \sin \left(2 \pi  A-k \ell _1\right)-(k^2 \ell ^2+3)^2 \sin k \left(\ell _1+2 \pi \right)=0,
\end{align}
\normalsize
taking into account the assumption $\ell_2=2\pi-\ell_3$. As usual in Bloch analysis, we are looking at eigenvalues of the fiber operators, i.e. solutions of the above condition, and their dependence on the quasimomentum $\theta$. We note that the left-hand side is an analytic function of the variables $k$ and $\theta$ so that by the implicit function theorem \cite[Chap.~VIII]{Ja56} the solutions are also analytic.

With a later purpose in mind, we rewrite the condition \eqref{SC,gen,ell-1,3} as
\begin{equation}\label{formul,abc}
a \cos \theta +b \sin \theta =c,
\end{equation}
where
\small
\begin{align}\label{SC,gen,abc}
a:= \; & 8  \left((k \ell +1)^2 \sin(A+k)\pi \;\cos (A-k)(\pi -\ell _3)-(k \ell -1)^2 \sin (A-k)\pi  \;\cos (A+k)(\pi -\ell _3)\right)   ,\notag\\[4pt] \nonumber
b:= \; & 8  \left((k \ell -1)^2 \sin (A-k)\pi \;\sin (A+k)(\pi -\ell _3)-(k \ell +1)^2 \sin(A+k)\pi \;\sin (A-k)(\pi -\ell _3)\right)  , \notag\\[4pt]
c:= \; & -(k \ell -1)^2 \left(4 \sin \left(2 \pi  A+k \ell _1\right)+(k \ell +1)^2 \left(\sin  k \left(2 \pi -\ell _1\right) +2 \sin  k \ell _1 \; \cos 2 k \left(\pi -\ell _3\right) \right)\right)      \notag\\
\;& + \,    4 (k \ell +1)^2 \sin \left(2 \pi  A-k \ell _1\right)+ (k^2 \ell ^2+3 )^2 \sin k \left(\ell _1+2 \pi \right).
\end{align}
\normalsize
One situation when condition \eqref{formul,abc} is satisfied occurs if all the three functions vanish simultaneously. Such a solution is independent of $\theta$ and corresponds thus to an infinitely degenerate eigenvalue. If $a^2+b^2\neq 0$, we can write $\sin \vartheta=\frac{a}{\sqrt{a^2+b^2}}$ and $\cos \vartheta=\frac{b}{\sqrt{a^2+b^2}}$ and cast the condition \eqref{formul,abc} into the form
$$
\sin (\vartheta+\theta)=\frac{c}{\sqrt{a^2+b^2}}\;.
$$
This shows, in general, that a number $k^2$ belongs to the spectrum if
\begin{equation}\label{bandCon,a2b2c2}
a^2+b^2-c^2\geq 0
\end{equation}
provided $a^2+b^2\neq 0$. In the opposite case when the left-hand side of \eqref{bandCon,a2b2c2} is negative, the number $k^2$ belongs to the gaps. We will refer to \eqref{bandCon,a2b2c2} here and in the following as to the band condition.
\begin{thm}\label{thmGen}
The spectrum has the following properties:
\begin{enumerate}[label=\textnormal{(\roman*)}]
\setlength{\itemsep}{-3pt}
\item \label{th1} For $A\in\mathbb{Z}$, the spectrum coincides with that of the non-magnetic chain analyzed in \cite{BET21}.
\item \label{th2} For  $ A-\frac 12 \in\mathbb{Z} $, the spectrum contains infinitely degenerate eigenvalues equal to $k^2=q^2 \left(n-\frac 12\right)^2$ with $q,n \in\mathbb{N}$ where $q$ is odd.
\item \label{th3a} For $ 2A\notin\mathbb{Z}$, flat bands appear for $A+\ell^{-1}\in\mathbb{Z}$ at the energy $k^2=\ell^{-2}$ independently of the other parameters.
\item \label{th3b} In addition, in the general model with $\ell_1>0$ it may happen that the spectral bands shrinks to points for particular values of parameters. In the asymmetric case, $\ell_3\ne\pi$, this is the case for $k=\frac{m\pi}{2(\pi -\ell_3)}$, $m\in\mathbb{Z}$, together with those values of $\ell_3$ for which $2k\notin\mathbb{N}$, the number $k^2$ then belongs to the spectrum for $A=-\pi^{-1}\,\arctan \big(\frac{2k\ell}{k^2 \ell ^2+1}\,\tan k\pi \big)+m'$ (with odd $m$), and $A=-\pi^{-1}\,\arctan \big(\frac{k^2 \ell ^2+1}{2k\ell}\,\tan k\pi\big)+m'$ (with even $m$), $m'\in\mathbb{Z}$, together with particular values of $\ell_1$; for a given energy $k^2$, such exceptional situations happen only once in the domain $A\in(0,1)$. In the symmetric case for any $2k\notin\mathbb{N}$, the number $k^2$ belongs to the spectrum for $A=-\pi^{-1}\,\arctan \big(\frac{k^2 \ell ^2+1}{2k\ell}\,\tan k\pi\big)+m'$ and particular values of $\ell_1$.
\item \label{th4} Away from those flat bands, the spectrum is absolutely continuous having a band-and-gap structure; it has infinitely many gaps in its positive part.
\item \label{th5} The negative spectrum consists of a pair of bands which may merge at particular values of the parameters.
\end{enumerate}
\end{thm}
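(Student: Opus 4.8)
The plan is to read off all six claims from the spectral condition \eqref{formul,abc} together with the band condition \eqref{bandCon,a2b2c2} and the explicit forms \eqref{SC,gen,abc} of $a,b,c$: a flat band at energy $k^2$ occurs precisely when $a(k)=b(k)=c(k)=0$, a point band when $a^2+b^2-c^2=0$ without all three vanishing, and a gap when $a^2+b^2-c^2<0$. \emph{Special fluxes (i)--(iii).} For (i), insert $A\in\mathbb Z$: then $\sin 2\pi A=0$, $\cos 2\pi A=1$, $\sin(A\pm k)\pi=\pm(-1)^A\sin k\pi$, and after elementary trigonometry $c$ is left unchanged while $a+ib$ is merely multiplied by the unimodular factor $\e^{-iA(\pi-\ell_3)}$, so $a^2+b^2-c^2$ — hence the spectrum — agrees with the $A=0$ case of \cite{BET21} (equivalently, an integer flux quantum is gauge-removable). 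For (ii), take $A\in\tfrac12+\mathbb Z$ and set $s_\pm=\sin(A\pm k)\pi$; since $a+ib=8\e^{-iA(\pi-\ell_3)}\big((k\ell+1)^2 s_+\e^{ik(\pi-\ell_3)}-(k\ell-1)^2 s_-\e^{-ik(\pi-\ell_3)}\big)$, the vanishing $a=b=0$ forces, off the non-generic set $2k(\pi-\ell_3)\in\pi\mathbb Z$, that $s_+=s_-=0$, i.e. $A\pm k\in\mathbb Z$, i.e. $2k$ is an odd integer — precisely the energies $k^2=q^2(n-\tfrac12)^2$ with $q$ odd; at such $k$ one then computes, using $2A,2k\in2\mathbb Z+1$, that $c$ collapses to a multiple of $\sin k\ell_1\,\sin^2 k\ell_3\,(k^2\ell^2-1)^2$, which locates the infinitely degenerate eigenvalues. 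For (iii), put $k=\ell^{-1}$: the prefactor $(k\ell-1)^2$ kills every term it multiplies in $a,b,c$, leaving $a,b\propto\sin(A+k)\pi$ and, after a sum-to-product step, $c=32\sin\pi(A+k)\,\cos\!\big(\pi(A-k)-k\ell_1\big)$; the hypothesis $A+\ell^{-1}\in\mathbb Z$ makes all three vanish, hence a flat band at $k^2=\ell^{-2}$ irrespective of $\ell_1,\ell_3,\theta$.

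\emph{Point bands (iv).} In the asymmetric case I would substitute $k=\frac{m\pi}{2(\pi-\ell_3)}$, so $\cos 2k(\pi-\ell_3)=(-1)^m$, $\sin 2k(\pi-\ell_3)=0$, and $(A\pm k)(\pi-\ell_3)$ differs from $A(\pi-\ell_3)$ by $\pm\tfrac{m\pi}2$; then $a,b$ simplify and, dividing $a^2+b^2=c^2$ by a common positive factor, one gets a relation $\tan\pi A=\rho\,\tan k\pi$ with $\rho\in\{-\tfrac{2k\ell}{k^2\ell^2+1},\,-\tfrac{k^2\ell^2+1}{2k\ell}\}$, the choice dictated by the parity of $m$ (the hypothesis $2k\notin\mathbb N$ keeping $\tan k\pi$ finite). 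Because $A\mapsto\tan\pi A$ is strictly monotone between its poles, this is uniquely solvable for $A\in(0,1)$, yielding the two stated arctangent formulas, while the remaining dependence of $c$ on $\ell_1$ pins down the ``particular values of $\ell_1$.'' The symmetric case $\ell_3=\pi$ is the degeneration $\pi-\ell_3\to0$, in which $\cos(A\pm k)(\pi-\ell_3)\to1$, $\sin(A\pm k)(\pi-\ell_3)\to0$, and $a^2+b^2=c^2$ reduces directly to the second arctangent relation for every admissible $k$.

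\emph{Structure of the spectrum (v)--(vi), and the main obstacle.} Off the flat bands, each solution branch $\theta\mapsto k(\theta)$ of \eqref{formul,abc} is real-analytic and non-constant — a constant branch would force $a\equiv b\equiv c\equiv0$ along it — so every band is a genuine interval and the Floquet--Bloch decomposition gives a purely absolutely continuous remaining spectrum with band-and-gap structure. For the infinitude of positive gaps I would extract the large-$k$ asymptotics from \eqref{SC,gen,abc}: $a,b=O(k^2)$, whereas with $\ell_2=2\pi-\ell_3$ and product-to-sum identities $c=-4k^4\ell^4\sin k\ell_1\,\sin k\ell_2\,\sin k\ell_3+O(k^2)$, so that $a^2+b^2-c^2=-16k^8\ell^8\prod_j\sin^2 k\ell_j+O(k^6)$ is negative outside shrinking neighbourhoods of the zeros of $\prod_j\sin k\ell_j$; the positive spectrum thus occupies a set of vanishing relative measure at high energy, forcing infinitely many gaps — the characteristic degree-three behaviour in which the Dirichlet component of \eqref{coupB} dominates. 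For (vi), put $k=i\kappa$ with $\kappa>0$: after re-expressing the spectral condition with real coefficients, the band condition becomes a real inequality in $\sinh,\cosh$, and one checks it holds on exactly two $\kappa$-intervals — equivalently, each fiber operator has precisely two negative eigenvalue branches, the coupling \eqref{coupB} supplying one bound state at each of the two vertices of the cell — which overlap or touch, i.e. ``merge,'' precisely when the top of the lower band reaches the bottom of the upper one. The conceptual steps here are short; the real labour, and the place I expect the main difficulty, is controlling the lower-order terms in the high-energy expansion of (v) tightly enough to confine the bands to the stated shrinking neighbourhoods and genuinely open the gaps between them, together with the exact count of two negative bands in (vi) and the parity bookkeeping in (iv).
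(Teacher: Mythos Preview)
Your treatment of (i)--(iii) and (v) follows essentially the same route as the paper, and the packaging of $a,b$ into the single complex quantity $a+ib=8\e^{-iA(\pi-\ell_3)}\big((k\ell+1)^2 s_+\e^{ik(\pi-\ell_3)}-(k\ell-1)^2 s_-\e^{-ik(\pi-\ell_3)}\big)$ is a clean way to read off their simultaneous vanishing; the paper instead writes $a,b$ separately and extracts the common factor $\cos k\pi$ by hand in case~(ii). For (vi) the paper, like you, does not give a self-contained count of the two negative bands but defers to Theorem~2.6 of \cite{BET21}, so your heuristic ``one bound state per vertex'' is at the same level of detail.

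The genuine gap is in (iv). You frame ``bands shrinking to points'' as the condition $a^2+b^2=c^2$ with not all three vanishing, but that is merely the condition for a band \emph{edge}: it selects a unique $\theta$ at that $k$, and by continuity of the dispersion the nearby $k$-values are still in the spectrum, so no band has collapsed. A band of zero width means the dispersion $k(\theta)$ is constant, which forces the $\theta$-dependent part of \eqref{formul,abc} to vanish identically, i.e.\ $a=b=0$, and then $c=0$ for a solution to exist --- so this is again a flat band, just one that occurs only at isolated parameter values rather than for all $\ell_1,\ell_3$ as in~(iii). Concretely, at $k=\frac{m\pi}{2(\pi-\ell_3)}$ the paper shows that $a$ and $b$ are both proportional to a single real quantity $\Lambda^\pm$ (the sign chosen by the parity of $m$), so $a=b=0\Leftrightarrow\Lambda^\pm=0$, and expanding $\Lambda^\pm=0$ via the addition formula for sine yields the relation $\tan\pi A=\rho\tan k\pi$ directly. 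Your proposed step of ``dividing $a^2+b^2=c^2$ by a common positive factor'' would instead produce $4(\Lambda^\pm)^2=c^2$, a different and incorrect condition; in the symmetric case $\ell_3=\pi$, where $b\equiv 0$, it becomes $a=\pm c$ rather than the correct $a=0$. Once $a=b=0$ is enforced through the arctangent formula for $A$, the remaining requirement $c=0$ is what selects the ``particular values of $\ell_1$'' in the statement.
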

\begin{proof}
The first claim follows from a simple gauge transformation, however, one can also check its validity directly. For $A\in\mathbb{Z}$, the functions $a$, $b$, and $c$ in \eqref{SC,gen,abc} simplify to
\begin{align}\label{gen-A-integ,abc}
a= \;& 8  \sin k\pi \left(2 \left(k^2 \ell ^2+1\right) \cos A\ell_3 \,
   \cos k(\pi -\ell_3 )-4 k \ell \, \sin A \ell_3 \,\sin k(\pi -\ell_3)\right),\notag\\ \nonumber
b= \; & 8  \sin k\pi \left(2 \left(k^2 \ell ^2+1\right) \sin A \ell_3 \,
   \cos k(\pi -\ell_3 )+4 k \ell  \cos A \ell_3 \, \sin k(\pi -\ell_3)\right), \notag\\
c= \; & -\left(k^2 \ell ^2-1\right)^2 \left(2 \sin k \ell_1 \, \cos 2k(\pi -\ell_3 )+\sin k(2
   \pi-\ell_1) \right)   \notag\\
\;& - \,   8 \left(k^2 \ell ^2+1\right) \sin k \ell_1 +\left(k^2 \ell ^2+3\right)^2 \sin
   k(\ell_1 +2 \pi ).
\end{align}
We see that the function $c$ is the same as the one in the non-magnetic case \cite[Sec.~2.1]{BET21}, and accordingly, for $k=n\in\mathbb{N}$ where both functions $a$ and $b$ vanish, we arrive at the flat bands of the non-magnetic chain, with the different possibilities described in \cite{BET21}. On the other hand, for the other values of $k\in\mathbb{R}$, one can check directly that the term $a^2+b^2 $ has the same value,
$$128 \sin ^2 k\pi \left(4 k^2 \ell ^2+\left(k^2 \ell ^2+1\right)^2 +\left(k^2 \ell ^2-1\right)^2 \cos 2k(\pi-\ell_3 )\right)>0, $$
for all $A\in\mathbb{Z}$ including, in particular, the non-magnetic case, $A=0$. Thus the band condition \eqref{bandCon,a2b2c2} and the resulting spectral properties of the system are insensitive to integer values of $A$, proving in this way claim \ref{th1}.

\smallskip

Let us consider next the case $ A-\frac 12 \in\mathbb{Z} $ where the functions $a$ and $b$ in \eqref{SC,gen,abc} simplify to
\small
\begin{align}\label{SC-halfintegA,ab}
 a=& \;    8 \cos k\pi \left(  (k \ell -1)^2 \;\sin \left( k(\pi -\ell_3 )-A\ell_3\right)   +(k\ell +1)^2 \;\sin \left(k(\pi -\ell_3 )+A\ell_3   \right)  \right)           ,\notag \\
 b=& \;    8 \cos k\pi \left(   (k \ell -1)^2 \;\cos \left( k(\pi -\ell_3 )-A\ell_3 \right) -(k\ell +1)^2 \;\cos \left(k(\pi -\ell_3 )+A\ell_3  \right)   \right)          .
\end{align}
\normalsize
Consequently, the spectral condition \eqref{formul,abc} has solutions independent of $\,\theta \,$ for $k=n-\frac12$, $n\in\mathbb{N}$, where $a^2+b^2=0$. Indeed, inspecting the function $c$ at these values, the spectral condition \eqref{formul,abc} reduces to
\begin{equation}\label{SC,cForHalf}
\big(  \left(2n-1\right)^{2}\,\ell^2 -4 \big)^2 \, \sin \frac{(2n-1)\ell_1 }{2} \;\sin^2 \frac{(2n-1)\ell_3 }{2} =0,
\end{equation}
from which we infer that
 \begin{itemize}
 \item if $\ell=\left(n-\frac12\right)^{-1}$ and $k=n-\frac12$ with $n\in\mathbb{N}$, the number $k^2$ belongs to the spectrum independently of the other parameters, being always embedded in the continuous spectrum;
   \item  if $\ell_1=2p \pi$,  the number $k^2=\left(n-\frac 12\right)^2$ belongs to the spectrum for any $n,p \in\mathbb{N}$. More generally, assuming that at least one of the edge lengths $\ell_i$, $i=1,3,$ is a rational multiple of $\pi$, namely $\ell_i=2 \frac pq \pi$ with coprime $p,q \in\mathbb{N}$ and odd $q$, we infer that the number $k^2=q^2 \left(n-\frac 12\right)^2$ belongs to the spectrum for all $n,p\in\mathbb{N}$ (in the case of $\ell_3$ being always embedded in the continuous spectrum). Indeed, $a$ and $b$ vanish again and $c$ is given by \eqref{SC,cForHalf} with $2n-1$ replaced by $q(2n-1)$.
   \end{itemize}
Away from the flat bands mentioned, the spectrum is absolutely continuous; in view of the analyticity mentioned above the solution cannot be constant in an open set unless is independent of $\theta$. It has a band-and-gap structure and is the same for all values of $A-\frac12\in\mathbb{Z}$. The band condition \eqref{bandCon,a2b2c2} in this case reads explicitly
\begin{equation}\label{bandCon,Gen,half-int}
128 \cos ^2 k\pi \left(k^4 \ell ^4-(k^2 \ell ^2-1)^2 \cos 2k(\pi-\ell_3 )+6 k^2 \ell ^2+1\right)-(\tau+\rho)^2 \geq 0,
\end{equation}
where
\begin{align*}
& \tau:=2 \left((k^2 \ell ^2-1)^2 \cos 2k(\pi -\ell_3 )-4 (k^2 \ell ^2+1)\right)\sin k\ell_1 ,\\
& \rho:=(k^2 \ell ^2-1)^2 \sin k(2 \pi-\ell_1)-(k^2 \ell ^2+3)^2 \sin k(\ell_1 +2 \pi )  .
\end{align*}
The band-gap pattern for specific examples is illustrated in Figs.~\ref{embedding-ell1} and \ref{embedding-ell3}. \mbox{}\\
 \begin{figure}[!htb]
\centering
\includegraphics[scale=.8]{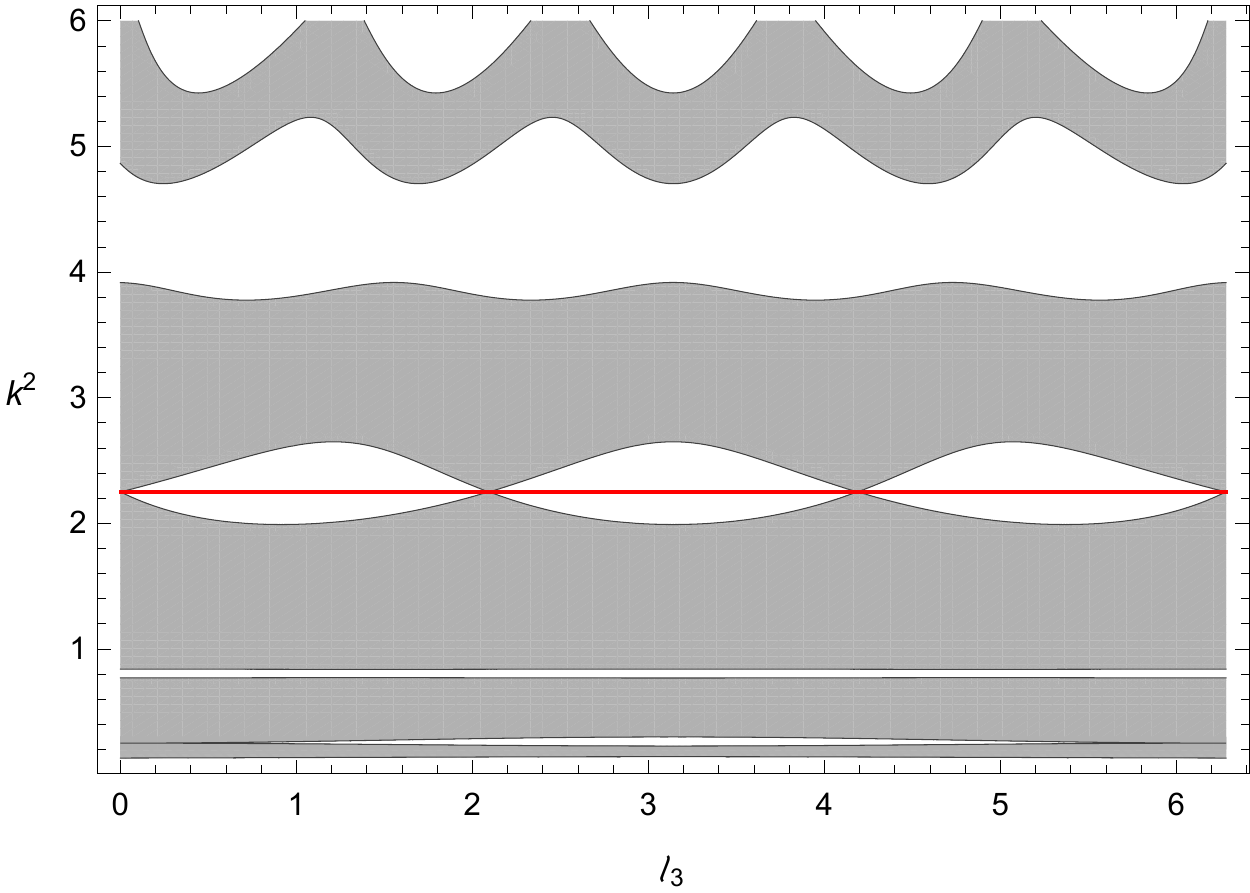}
\caption{Positive spectrum of the  ring array give  by the band condition \eqref{bandCon,Gen,half-int} in dependence on $\ell_3$ for $A-\frac12\in\mathbb{Z}$, $\ell_1=\frac{2}{3}\pi$ and $\ell=1$. The red line marks the flat band $q^2 \left(n-\frac 12\right)^2$ with $q=3$ and $n=1$; the other flat bands referring to higher values of $n$ lay outside the picture area. In this and all subsequent figures, the gray parts represent the spectral bands.   }
\label{embedding-ell1}
\end{figure}
\begin{figure}[!htb]
\centering
\includegraphics[scale=.8]{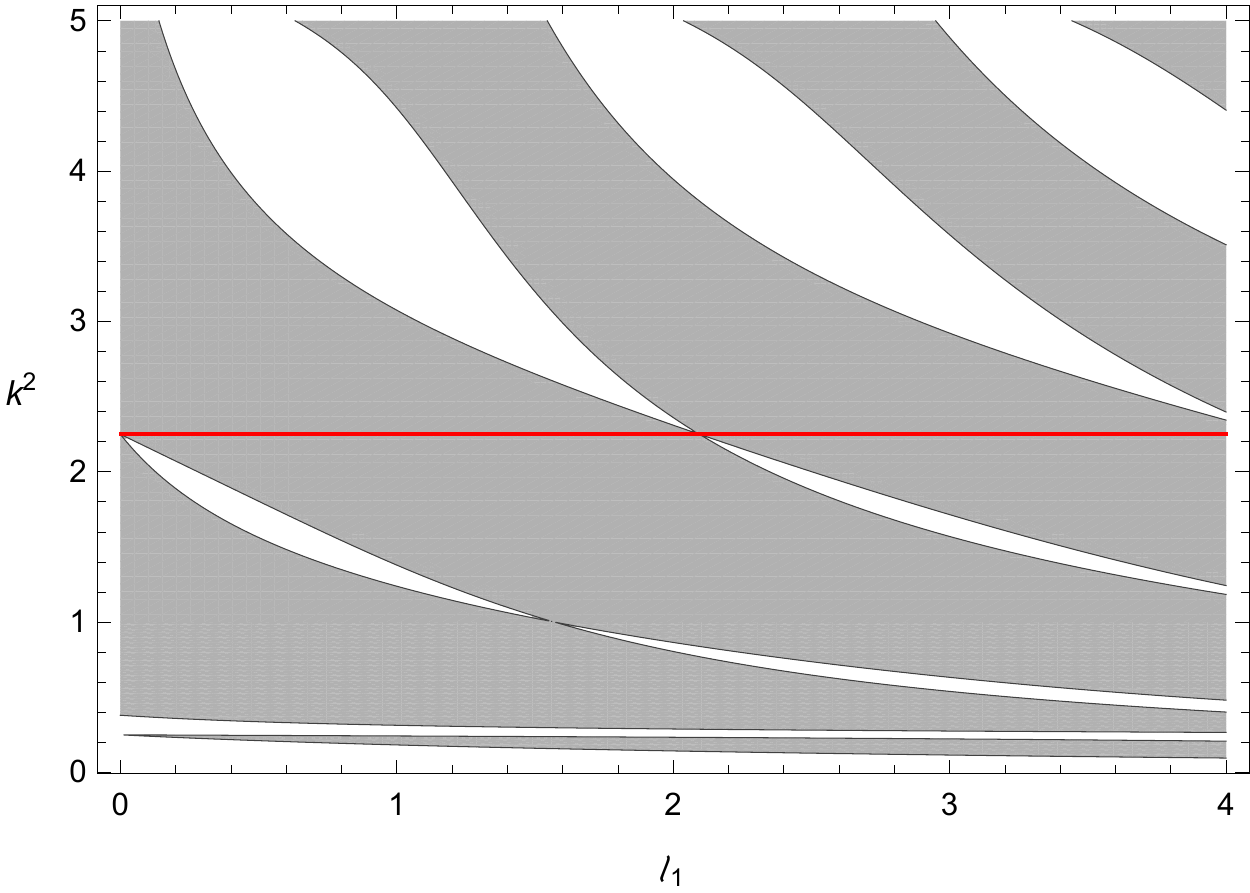}
\caption{The spectrum for the same parameter choice but with $\ell_3=\frac{2}{3}\pi$, now in dependence on $\ell_1$. The red line again corresponds to the flat band $q^2 \left(n-\frac 12\right)^2$ with $q=3$ and $n=1$.}
\label{embedding-ell3}
\end{figure}
\newpage
Now, let us pass to the general case in which the magnetic flux is neither integer nor half-integer, i.e. $2A\notin\mathbb{Z}$. To prove the claim \ref{th3a}, we have to investigate different situations in which both the functions $a$ and $b$ in \eqref{SC,gen,abc} can vanish simultaneously for a general $A\in\mathbb{R}$, specifically: (\emph{a}) the two functions $\sin(A\pm k)\pi$ vanish simultaneously, which can happen either for $A\in\mathbb{Z}$ together with $k=n\in\mathbb{N}$, or for $A-\frac{1}{2}\in\mathbb{Z}$ together with $k=n-\frac{1}{2}\in\mathbb{N}$; (\emph{b}) the four functions $\cos (A\pm k)(\pi -\ell _3)$ and $\sin (A\pm k)(\pi -\ell _3)$ vanishing simultaneously, which is clearly not possible; and furthermore, (\emph{c}) $\sin(A+k)\pi=0$ when both the functions $a$ and $b$ vanish simultaneously at $k=\ell^{-1}$; inspecting then $c$ at these values, one finds that for all $A+\ell^{-1}\in\mathbb{Z}$ and $k=\ell^{-1}$ the number $k^2$ belongs to the spectrum independently of the other parameters.

\smallskip

That, however, does not exhaust all situations in which $a$ and $b$ vanish simultaneously. Consider first the asymmetric situation, $\ell_3\ne\pi$. By simple manipulations, the equations $a=0$ and $b=0$ can be, respectively, rewritten as
$$
\frac{(k \ell -1)^2 \sin (A-k)\pi}{(k \ell +1)^2 \sin (A+k)\pi}=\frac{\cos (A-k)(\pi -\ell_3)}{\cos (A+k)(\pi -\ell_3)}, $$
and
$$ \frac{(k \ell -1)^2 \sin (A-k)\pi}{(k \ell +1)^2 \sin (A+k)\pi}=\frac{\sin (A-k)(\pi -\ell_3)}{\sin (A+k)(\pi -\ell_3)}; $$
consequently, a necessary condition for $a$ and $b$ to vanish simultaneously is
\begin{equation}\label{ab=0;con}
 \frac{\cos (A-k)(\pi -\ell_3)}{\cos (A+k)(\pi -\ell_3)}=\frac{\sin (A-k)(\pi -\ell_3)}{\sin (A+k)(\pi -\ell_3)}.
\end{equation}
Note that those values of $A$ which have been excluded from the consideration here, i.e. the ones in which the denominators of the fractions in \eqref{ab=0;con} vanish, give the same result as in the case (\emph{c}) mentioned above. On the other hand, one can easily check that the necessary condition \eqref{ab=0;con} is fulfilled for $k=\frac{m\pi}{2(\pi -\ell_3)}$ with $m\in\mathbb{Z}\,$; substituting this into $a$ and $b$ in \eqref{SC,gen,abc}, we arrive at
$$ \quad a=  -2 \, i^{3 m+1} \, \Lambda^+\,\sin A (\ell_3-\pi )  , \qquad b=  2 \, i^{3 m+1} \, \Lambda^+ \, \cos A (\ell_3-\pi )  , $$
for odd values of $m$, and
$$ a=  2 \, i^{3 m}\, \Lambda^-\,\cos A (\ell_3-\pi )  , \;\;\quad\qquad b=  2 \, i^{3 m} \, \Lambda^- \, \sin A (\ell_3-\pi ) , $$
for even values of $m$, where
\small
\begin{equation}\label{Lambda}
\Lambda^\pm:=  \left(   \frac{(m \ell +2)\pi-2 \ell_3}{\ell_3-\pi}   \right)^2 \sin \left(A+\frac{m\pi}{2 (\pi-\ell_3) }\right)\pi\pm\left(  \frac{2 \ell_3+  (m \ell -2)\pi}{\ell_3-\pi}  \right)^2 \sin \left(A-\frac{m\pi}{2 \pi-\ell_3) }\right)\pi.
\end{equation}
\normalsize
As a result, the functions $a$ and $b$ can vanish simultaneously if and only if the function $\Lambda^+$, or $\Lambda^-$ vanishes (depending on the parity of $m$). For convenience, let us rewrite \eqref{Lambda}, again, in the compact form
\begin{equation}\label{Lambda,compact}
\Lambda^\pm=  4\left(   k\ell+1\right)^2 \sin \left(A+k\right)\pi\pm  4\left(   k\ell-1\right)^2 \sin \left(A-k\right)\pi  ,
\end{equation}
where $k=\frac{m\pi}{2(\pi -\ell_3)}$. As already mentioned, the functions $k\mapsto\sin (A\pm k)\pi$ cannot vanish simultaneously unless $2A\in\mathbb{Z}$. Manipulating \eqref{Lambda,compact}, by expanding the sine functions of the composed argument, the equations $\Lambda^{+}=0$ and $\Lambda^{-}=0$ can be, respectively, rewritten as
\begin{equation}\label{Lambda,oddm}
\tan A\pi= -\frac{2k\ell}{k^2 \ell ^2+1}\,\tan k\pi,
\end{equation}
and
\begin{equation}\label{Lambda,evenm}
\tan A\pi= -\frac{k^2 \ell ^2+1}{2k\ell}\,\tan k\pi,
\end{equation}
being, respectively again, satisfied for
\begin{equation}\label{A,oddm}
A=-\frac{1}{\pi}\,\arctan\left(\frac{2k\ell}{k^2 \ell ^2+1}\,\tan k\pi\right)+m',
\end{equation}
and
\begin{equation}\label{A,evenm}
 A=-\frac{1}{\pi}\,\arctan\left(\frac{k^2 \ell ^2+1}{2k\ell}\,\tan k\pi\right)+m',
\end{equation}
with $m'\in\mathbb{Z}$. Note that $\tan A\pi$ is periodic with the period $T=1$, it is not defined at $A=\frac12$, and it is increasing for both $A\in(0,\frac12)$ and $A\in(\frac12,1)$ where its range lies in the interval $(0,\infty)$, and $(-\infty,0)$ in the former and latter domains, respectively. On the other hand, for any given values of parameters $\ell>0$ and $k>0$ (or, equivalently, those $\ell_3$ for which $2k=\frac{m\pi}{\pi -\ell_3}\notin\mathbb{N}$), the functions on the right-hand side of \eqref{Lambda,oddm} and \eqref{Lambda,evenm} are always positive or negative constants; as a result, over the domain $A\in(0,1)$, they may cross the function $\tan A\pi$ only once.

In the symmetric case, $\ell_2=\ell_3=\pi$, the function $b$ in \eqref{SC,gen,abc} vanishes identically and the functions $a$ and $c$ simplify to the form
\begin{align}\label{gem,sym,ac}
a=& \; 16 \left( \left(k^2 \ell ^2+1\right) \sin k\pi \, \cos A\pi+2 k \ell \,  \sin A\pi \, \cos k\pi\right), \\[5pt] \nonumber
c=& \;  \left(k^2 \ell ^2+3\right)^2 \sin k(\ell_1 +2 \pi )-2  \left(\left(k^2 \ell ^2-1\right)^2+4  \left(k^2 \ell ^2+1\right)\cos 2 A\pi\right) \sin k\ell_1    \\ \nonumber
& +16 \,k \ell  \,\sin 2A\pi \; \cos k\ell_1-\left(k^2 \ell ^2-1\right)^2 \sin k(2 \pi -\ell_1 )\,,  \nonumber
\end{align}
from which claims \ref{th1}--\ref{th3a} of \autoref{thmGen} follow easily. As for the claim \ref{th3b}, one obtains exactly the same condition as \eqref{Lambda,evenm} by manipulating the equation $a=0$, in this case for the general $k\in\mathbb{R}$.

\smallskip

Away from those exceptional cases mentioned, the rest of the spectrum is again absolutely continuous having a band-and-gap structure; the general band condition is given by \eqref{bandCon,a2b2c2} together with \eqref{SC,gen,abc}. To provide an illustration, we show two parameter dependencies in Figs.~\ref{figGen1,ell3,pi3} and \ref{figGen2,ell1,pi3}; we include here also negative part of the spectrum given by the condition \eqref{gen-Neg,abc} which will be discussed below.
\begin{figure}[!htb]
\centering
\includegraphics[scale=.8]{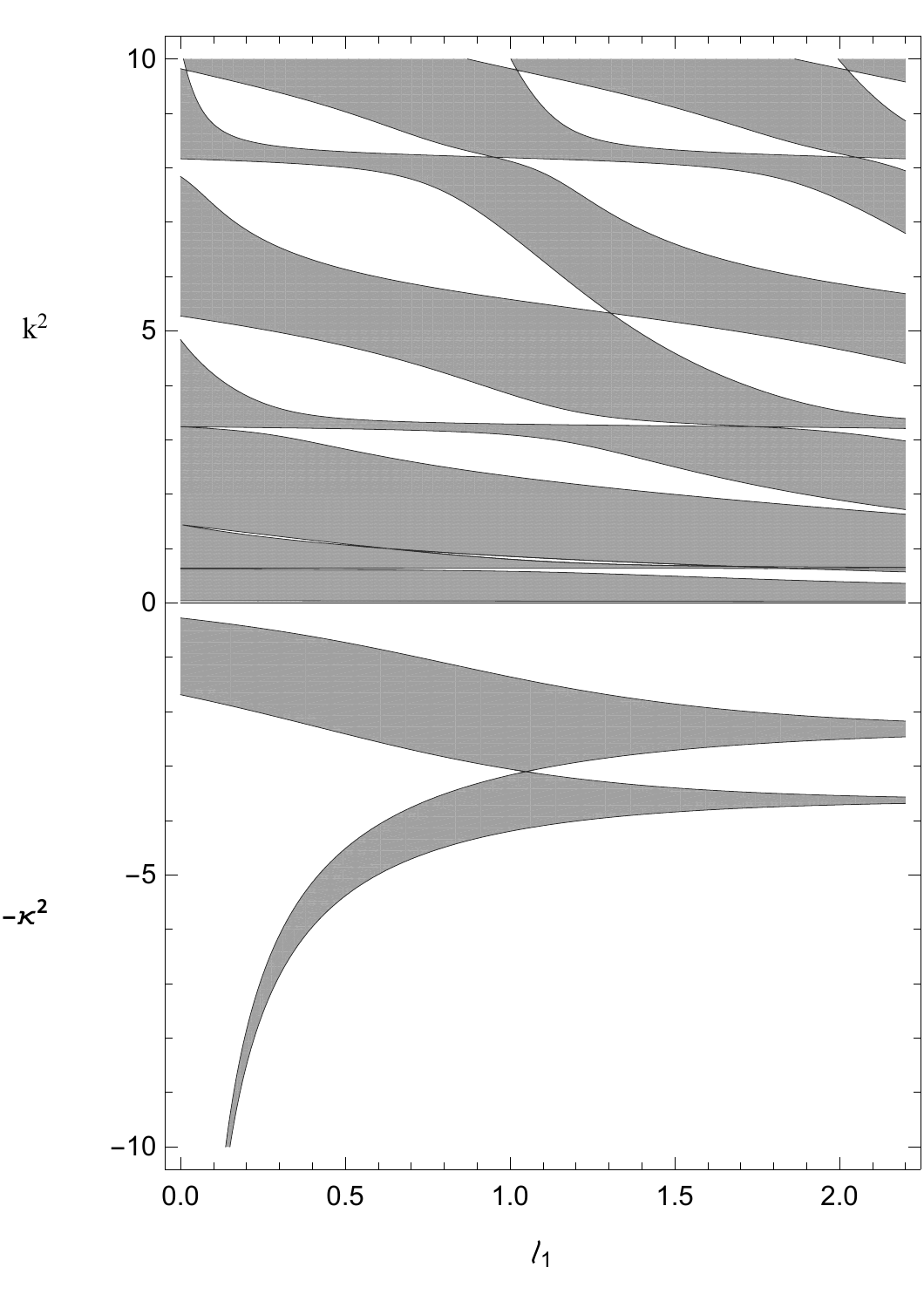}
\caption{Spectrum in dependence on $\ell_1$ for $\ell_3=\frac{\pi}{3}$, $\ell=1$, and $A=\frac 15$.}
\label{figGen1,ell3,pi3}
\end{figure}
\begin{figure}[!htb]
\centering
\includegraphics[scale=.8]{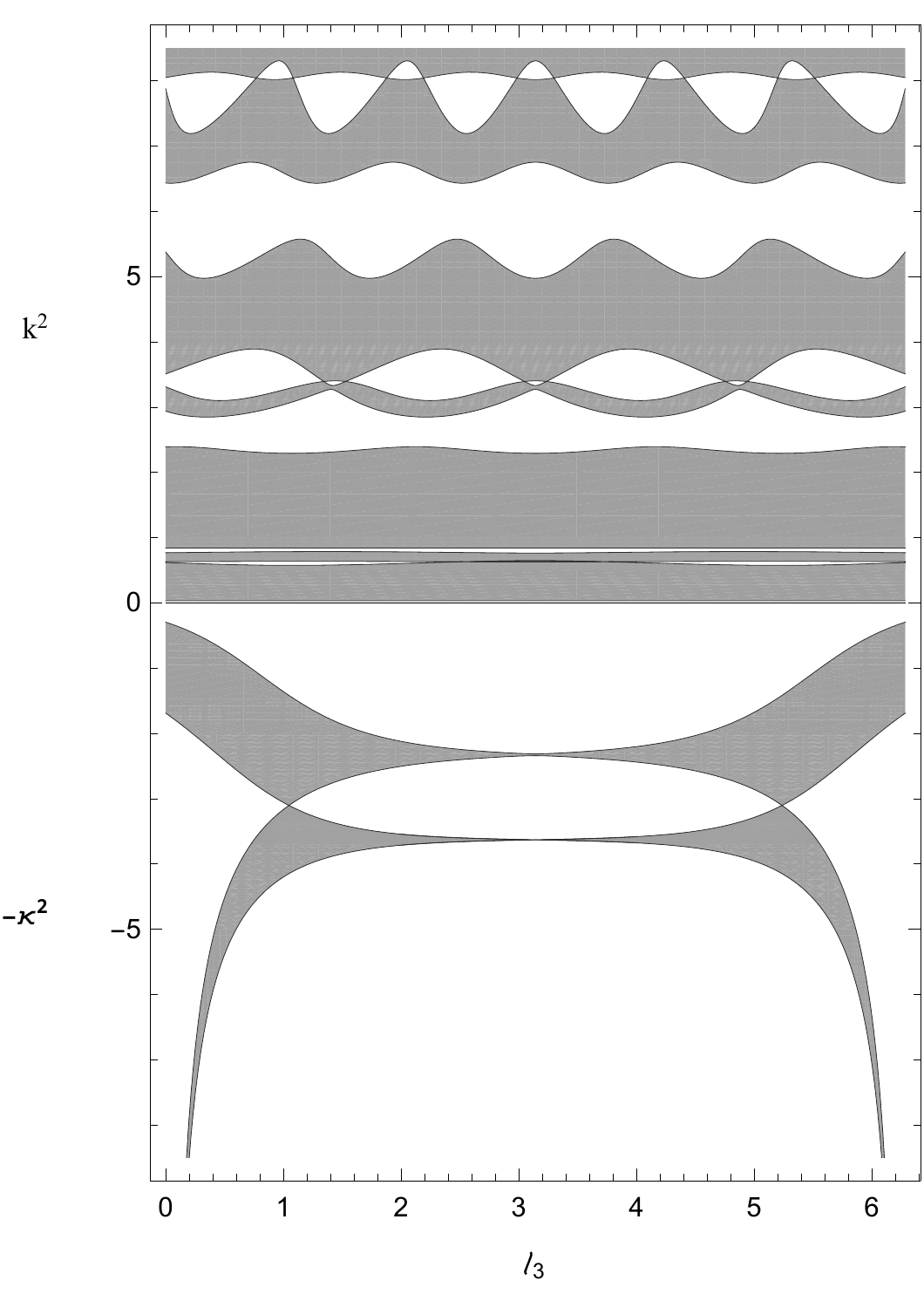}
\caption{Spectrum in dependence on $\ell_3$ for the same parameter choice but with $\ell_1=\frac{\pi}{3}$; the symmetry with respect to interchange of $\ell_3$ and $2\pi-\ell_3$ is obvious.}
\label{figGen2,ell1,pi3}
\end{figure}
As a pendant to Fig.~\ref{figGen1,ell3,pi3} we show in Fig.~\ref{fig-symm-ell=2} the spectral pattern in dependence on $\ell_1$ in the symmetric case $\ell_3=\pi$.
\begin{figure}[!htb]
\centering
\includegraphics[scale=.8]{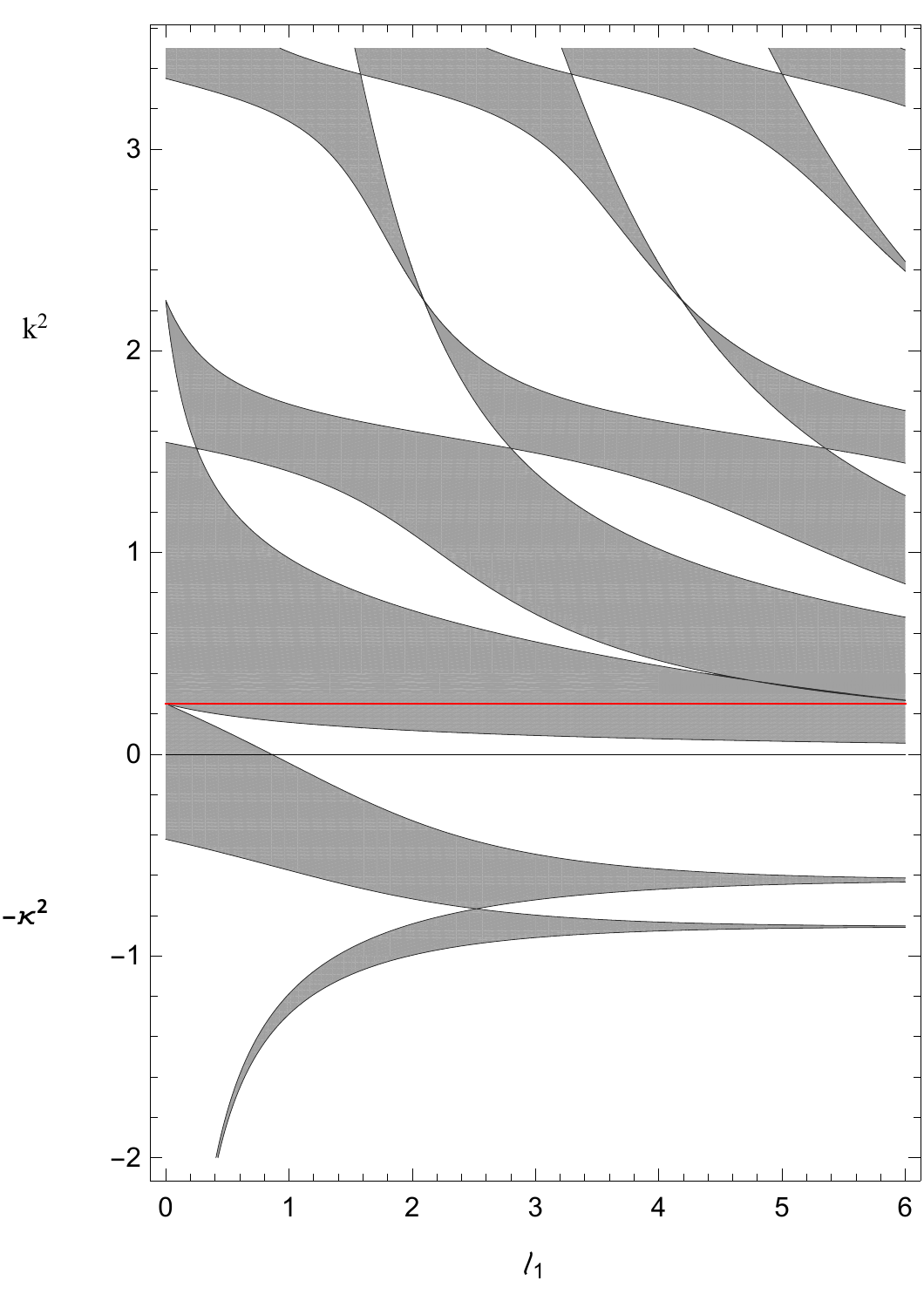}
\caption{Spectrum of the symmetric array model for $\ell=2$ and $A-\frac 12\in\mathbb{Z}$ in dependence on $\ell_1$. The red line at energy $\frac14$ corresponds to the flat band of the first bullet point following eq.~\eqref{SC,cForHalf} with $n=1$; the values $\ell_1=\left\{\frac{2\pi}{3},\frac{4\pi}{3}\right\}$ with energy $k^2 = \frac94$, for which the bands shrink to points, correspond to the flat bands of the second bullet point there.}
\label{fig-symm-ell=2}
\end{figure}

Before proceeding to claim~\ref{th5} of the theorem, let us take a closer look at the properties of the positive part of the spectrum; inspecting the spectral condition we see that
\begin{itemize}
\item The spectrum is symmetric with respect to the swap of $\ell_3$ and $\ell_2=2\pi-\ell_3$; note that although $\ell_3$ appears as the argument of both sine and cosine functions in $a$, $b$, and $c$ in \eqref{SC,gen,abc}, the symmetry arises from the band condition \eqref{bandCon,a2b2c2} which contains them in the squared form.
\item As in the non-magnetic case, it may happen that some of the gaps close when the boundaries of neighboring bands touch as can be seen in Figs.~\ref{embedding-ell1}--\ref{figell10} (as well as Fig.~\ref{figell30} in the degenerate situation); such crossings appear in sequences with the same energy. Denoting the left-hand side of \eqref{SC,gen,ell-1,3} as $\Delta$, sufficient conditions to have such crossings are given by the relations
\begin{equation}\label{cross,con}
\frac{\partial \Delta}{\partial \ell_i}=\frac{\partial \Delta}{\partial \theta}=\frac{\partial \Delta}{\partial k}=0,\qquad i=1,3,
\end{equation}
provided that $\ell$ and one of the lengths $\ell_i$ are fixed, in particular, one can identify their closed-form coordinates under commensurability conditions, $\ell_i=\frac pq \pi$, similarly as in \cite{BET21}.
\item The dominating property of the considered class of vertex couplings, namely that the transport vanishes at the high-energy limit if the vertex parity is odd, is not affected by the presence of the magnetic field. As the vertex degree is three in our case, one expects that the spectrum will be dominated by gaps at high energies. This is indeed the case, as we can check by keeping the highest power of $k$ in the band condition \eqref{bandCon,a2b2c2} --~together with \eqref{SC,gen,abc}~-- which is the same as in the non-magnetic case, that is,
\begin{equation}
-16\;k^8 \ell ^8\;\sin ^2 k \ell _1 \;\sin ^2 k \left(2 \pi -\ell _3\right) \;\sin ^2 k \ell _3 +\mathcal{O}(k^6) \geq 0 ,
\end{equation}
indicating that the spectral bands may exist only in the vicinity of the points $k_j=\frac{n\pi}{\ell_{j}}$ with $j=1,2,3$ and $n\in\mathbb{N}$, recall that $\ell _2+\ell _3=2\pi$. In other words, \emph{the probability of belonging to the spectrum} for a randomly chosen value of $k$, introduced by Band and Berkolaiko \cite{BB13} as
\begin{equation}\label{probsigma}
P_{\sigma}(H):=\lim_{K\to\infty} \frac{1}{K}\left|\sigma(H)\cap[0,K]\right|,
\end{equation}
equals zero, and not only for incommensurate edges.
\item The effect of the magnetic field is more subtle. In contrast to the case $A=0$, for instance, it may happen that the first positive band remains separated from zero. To see that, we use Taylor expansion around $k=0$ in the band condition \eqref{bandCon,a2b2c2} and \eqref{SC,gen,abc},
$$ -8 \sin ^2 A\pi  \left(   \ell_1 ^2+4 \pi  (\ell_1 +\ell_3 )+\cos 2A\pi \left(4\ell ^2-\ell_1 ^2\right)+4 \ell_1 \ell  \sin 2A\pi-2 \ell_3 ^2-4 \ell^2   \right)k^2+ \mathcal{O}(k^4) , $$
where the leading term may be for a non-integer $A$ negative for some parameters values. In particular, one can check that for small values of $\ell$, the expression in the large brackets reads
$$ \left(  \ell_1^2\left(1-\cos2A\pi\right) +2\ell_3\left(2\pi-\ell_3\right)+4\pi\ell_1 \right) + \mathcal{O}(\ell) $$
with the leading term being always positive, showing that small values of $k$ may not belong to the spectrum.
\end{itemize}

\smallskip

After this interlude, let us turn now to the \emph{negative part} and conclude thus the proof of Theorem~\ref{thmGen}. It can be obtained from \eqref{SC,gen,ell-1,3}, or equivalently from \eqref{SC,gen,abc} in which we replace $k$ by $i\kappa$ with $\kappa>0$. This yields the spectral condition \eqref{formul,abc} where this time we have
\begin{align}\label{gen-Neg,abc}
a=\;&     -4 (\kappa ^2 \ell ^2-1 ) \left(\cos A \ell_3 \, \sinh \kappa(2 \pi -\ell_3 ) +\cos A (2 \pi -\ell_3 ) \, \sinh \kappa\ell_3  \right)      \notag\\ \nonumber
\;& + \, 8 \kappa  \ell \left( \sin A (2 \pi -\ell_3 )\, \cosh  \kappa\ell_3  +  \sin A\ell_3 \,\cosh \kappa(2 \pi -\ell_3 )    \right) , \\[10pt] \nonumber
b=\;&  4 (\kappa ^2 \ell ^2-1) \left(\sin A (2 \pi -\ell_3 )\, \sinh \kappa\ell_3 -\sin A \ell_3 \, \sinh \kappa(2 \pi -\ell_3 ) \right)         \notag \\
\;& + \, 8 \kappa\ell \left( \cos A (2 \pi-\ell_3 )\, \cosh \kappa\ell_3 -  \cos A\ell_3 \, \cosh \kappa(2 \pi -\ell_3 ) \right),  \\[10pt]
c=\;&   \left(     4 (\kappa ^2 \ell^2-1)  \cos 2 A\pi  +  (\kappa ^4 \ell ^4+3 ) \left(\cosh 2 \kappa\pi-\cosh 2 \kappa(\pi -\ell_3 )  \right)     \right)   \sinh \kappa\ell_1         \notag\\
\;& + \, 2   \left(4 \kappa  \ell  \sin 2A\pi -(\kappa ^2 \ell ^2-1) \left( \sinh 2\kappa (\pi-\ell_3 ) +\sinh 2 \kappa\pi    \right)\right) \cosh \kappa\ell_1       \notag\\
\;& - \, 4\left(\kappa ^2 \ell ^2-1\right) \cosh \kappa(2 \pi -\ell_3 ) \, \sinh \kappa(\ell_1 +\ell_3 )  .\notag
\end{align}
As in the case of positive energies, a number $-\kappa^2$ belongs to a spectral band if it satisfies the band condition \eqref{bandCon,a2b2c2}, now with the input from \eqref{gen-Neg,abc}. The negative spectrum has the following properties:
\begin{itemize}
  \item There is no flat band in the negative part of the spectrum; for $A-\frac 12 \in\mathbb{Z}$, the functions $a$ and $b$ in \eqref{gen-Neg,abc} contain the multiplicative factor $\cosh \kappa\pi$ which is non-zero for all $\kappa>0$.
  \item As discussed and proved in \cite{BET21}, there are at most two negative bands which may merge at one point; in the general case, we are unable to find the crossing coordinate in a closed form, however, one can check this crossing employing a relation analogous to \eqref{cross,con}. In particular, in the case of non-magnetic \emph{symmetric} chain, as illustrated in \cite{BET20}, this crossing happens at $\ell_1=\pi$; however, this is no longer true in the magnetic case, cf. Fig.~\ref{fig-symm-ell=2}.
\item As in the non-magnetic case, in the limit $\ell_1\rightarrow \infty$, the bands shrink to points. Note that the only $\ell_1$-dependent function in \eqref{gen-Neg,abc} is $c$, hence, considering $\sinh \kappa\ell_1 \approx \cosh \kappa\ell_1 \approx \frac {\e^{\kappa\ell_1} }{2}$, the spectral condition \eqref{formul,abc} for a fixed $\kappa>0$ can be expressed in the form
\begin{equation}\label{large,ell-1,gen}
-f(\ell,\ell_3,A;\kappa)\, \e^{\kappa\ell_1}+\left( a \cos\theta+b\sin\theta \right)+\mathcal{O}(\e^{-\kappa\ell_1}) =0,
\end{equation}
where
\begin{align}\label{large,f,gen}
f(\ell,\ell_3,A;\kappa):&=4  (\kappa ^2 \ell ^2-1 ) \left(\cos 2A\pi-\sinh 2\kappa\pi \right)+ (\kappa ^4 \ell ^4-2 \kappa ^2 \ell^2+5)\cosh 2\kappa\pi  \\ \notag
& \; +8 \kappa\ell \; \sin 2A\pi-(\kappa ^2 \ell^2+1)^2 \cosh 2\kappa(\pi -\ell_3 )  , \notag
\end{align}
implying that in the indicated limit the bands shrink to the points determined by the equation $f(\ell,\ell_3,A;\kappa)=0$.
  \item In contrast to the non-magnetic case \cite{BET21}, for $A\notin\mathbb{Z}$ and particular values of parameters, it may happen that the negative band starts at zero. One can check this by taking the Taylor expansion of the band condition \eqref{bandCon,a2b2c2} --~together with \eqref{gen-Neg,abc}~-- around $\kappa\rightarrow 0+$; for the sake of simplicity, let us evaluate this for the particular case of $A-\frac 12\in\mathbb{Z}$ for which, we have
      $$-64 \left(  -4 \ell ^2+ \left(\ell_1^2 -\ell_3^2\right)+2 \pi  \left(\ell_1 +\ell_3 \right) \right)\kappa^2+\mathcal{O}(\kappa^4), $$
      where the leading term may be positive for particular values of parameters, implying that small values of $\kappa$ may belong to the spectral bands.
  \item As a particular case of the above point, in contrast to the non-magnetic symmetric case, it may happen that the interval $(-\ell^{-2},0)$ belongs to the spectrum for particular values of parameters. Again, for the sake of simplicity, we show this for the special case $A=m-\frac 12$, $m\in\mathbb{Z}$, for which we can rewrite the negative spectral condition in the form
\small
\begin{align}\label{SymInt-1ell-Neg}
&  \cos\theta=g(\kappa) , \\[7pt]
& g(\kappa):=\frac{\left(\kappa ^2 \ell ^2-3\right)^2 \sinh  \kappa( 2 \pi+\ell_1 ) -\left(\kappa ^2 \ell^2+1\right)^2 \sinh \kappa(2 \pi -\ell_1 )  -2 \left(\kappa ^4 \ell ^4+6\kappa ^2 \ell ^2-3\right) \sinh\kappa\ell_1  }{32 \,  (-1)^m \, \kappa\,\ell  \cosh \kappa\pi  }. \notag
\end{align}
\normalsize
Now, by an example, we show that this equation may have solutions for $\kappa\in(0,\ell^{-1})$; inspecting $g(\kappa)$ at the specific value $\kappa=\frac {1}{2} \ell^{-1}\in(0,\ell^{-1})$, we arrive at
$$g\left(\frac {1}{2\ell}\right)=\frac{(-1)^m}{8}\;\text{sech} \frac{\pi }{2 \ell }  \left(3 \sinh \frac{\pi }{\ell}  \cosh  \frac{\ell_1 }{2 \ell } +\left(\frac{73}{16} \cosh \frac{\pi }{\ell } +\frac{23}{16}\right) \sinh \frac{\ell_1 }{2\ell } \right) , $$
which is monotonous with respect to increasing $\ell_1$ for a fixed value of $\ell$ in view of the fact that both functions $\sinh x$ and $\cosh x$ are positive and monotonously increasing with respect to $x>0$. Consequently, the function reaches its minimum or maximum --~depending on the parity of $m$~-- at $\ell_1=0$ with the value $\pm \frac{3}{4} \sinh \frac{\pi }{2 \ell } $, the upper and lower signs correspond to even and odd $m$, respectively. As a result, by increasing $\ell$, the extremum tends to zero, which means that $g(\kappa)$ may lay in the interval $[-1,1]$.
\end{itemize}
\end{proof}

\section{The case $\ell_1=0$}
\label{sect:Mod-ell1=0}
Let us now pass to degenerate versions of the model when one of the edge lengths shrinks to zero, starting with the case of a tightly coupled array in which the rings touch directly, $\ell_1=0$. The vertices are then of degree four, and following the observation made in \cite{ET18}, we expect that this will have a profound influence on the behavior of the spectrum at high energies.

\subsection{Positive spectrum}
\label{sect:Mod-ell1=0,pos}
In this case, $a$, $b$, and $c$ in the spectral condition \eqref{formul,abc} are as follows,
\small
\begin{align}\label{SC,ell-1=0,abc}
a=& \;(k \ell +1)^2 \sin(A+k)\pi \;\cos (A-k)(\pi -\ell _3)-(k \ell -1)^2 \sin (A-k)\pi \;\cos (A+k)(\pi -\ell _3),\notag\\[5pt]
b=& \;(k \ell -1)^2 \sin (A-k)\pi \;\sin (A+k)(\pi -\ell _3)-(k \ell +1)^2 \sin(A+k)\pi \;\sin (A-k)(\pi -\ell _3) ,\notag\\[5pt]
c=&  \; 2 k \ell  \sin 2 A\pi + (k^2 \ell^2+1) \sin 2 k\pi .
\end{align}
\normalsize
In the particular case of $A-\frac 12\in\mathbb{Z}$, these functions simplify to
\begin{align}\label{SC,ell-1=0,abc,half-A}
a=& \;  \cos k\pi \left(2 \left(k^2 \ell ^2+1\right) \cos A \ell_3 \, \sin k(\pi -\ell_3 ) +4 k \ell \, \sin A \ell_3 \,  \cos k(\pi -\ell_3 )  \right)      ,\notag\\[5pt]
b=& \;   \cos k\pi \left(  2 \left(k^2 \ell ^2+1\right) \sin A \ell_3 \, \sin k(\pi -\ell_3 )-4 k \ell \, \cos A \ell_3  \, \cos k(\pi -\ell_3 )    \right)      ,\notag\\[5pt]
c=& \;  2(k^2 \ell ^2+1) \sin k\pi\;\cos k\pi,
\end{align}
indicating that, again, the flat bands in the half-integer flux situation occur at $k=n-\frac 12$ with $n\in\mathbb{N}$, in this case, for all $\ell_3>0$. Excluding those flat bands, the spectrum is for  $A-\frac12\in\mathbb{Z}$ continuous having a band-and-gap structure; the band condition \eqref{bandCon,a2b2c2} in this case reads explicitly
\begin{equation}\label{band-gen-halfInt-A}
 4 k^2 \ell ^2 +\left(k^2 \ell^2+1\right)^2 \cos 2k\pi -\left(k^2 \ell ^2-1\right)^2 \cos 2k(\pi -\ell_3 ) \geq 0  .
\end{equation}
In the generic flux case, $2A\notin\mathbb{Z}$, the claim \ref{th3a} of Theorem~\ref{thmGen} is still valid, which can be proven in the same way as in Sec.~\ref{sect:GenMod} since the functions $a$ and $b$ are the same as those in the general model, however, the claim \ref{th3b} does not hold in this case; one can check this directly by substituting \eqref{A,oddm} and \eqref{A,evenm}, respectively, into $c$ given by \eqref{SC,ell-1=0,abc}, which after simplifications yields
$$   c=\frac{\left(k^2 \ell ^2-1\right)^2 \left(k^2 \ell ^2+1\right) \sin 2k\pi}{\left(k^2 \ell ^2+1\right)^2+4 k^2 \ell ^2 \tan^2 k\pi },  $$
and
$$   c= \frac{2 \left(k^2 \ell ^2-1\right)^2 \left(k^2 \ell ^2+1\right) \sin^2 k\pi \; \tan k\pi}{4 k^2 \ell ^2+\left(k^2 \ell ^2+1\right)^2 \tan ^2 k\pi} .   $$
As can be seen, these functions $c$ may vanish only for $k=\ell^{-1}$ and $2k\in\mathbb{N}$ which is the situation already discussed, in other words, there are no bands shrinking to points for particular parameter values in this case.

\smallskip

Away from the flat bands $\ell^{-2}$ which occur for $A+\ell^{-1}\in\mathbb{Z}$, the spectrum for the values $2A\notin\mathbb{Z}$ is again absolutely continuous having a band-gap structure; the spectral bands are given by the condition \eqref{bandCon,a2b2c2} together with \eqref{SC,ell-1=0,abc}. The band-gap pattern for particular values of parameters is illustrated in Fig.~\ref{figell10}.
\begin{figure}[!htb]
\centering
\includegraphics[scale=1.1]{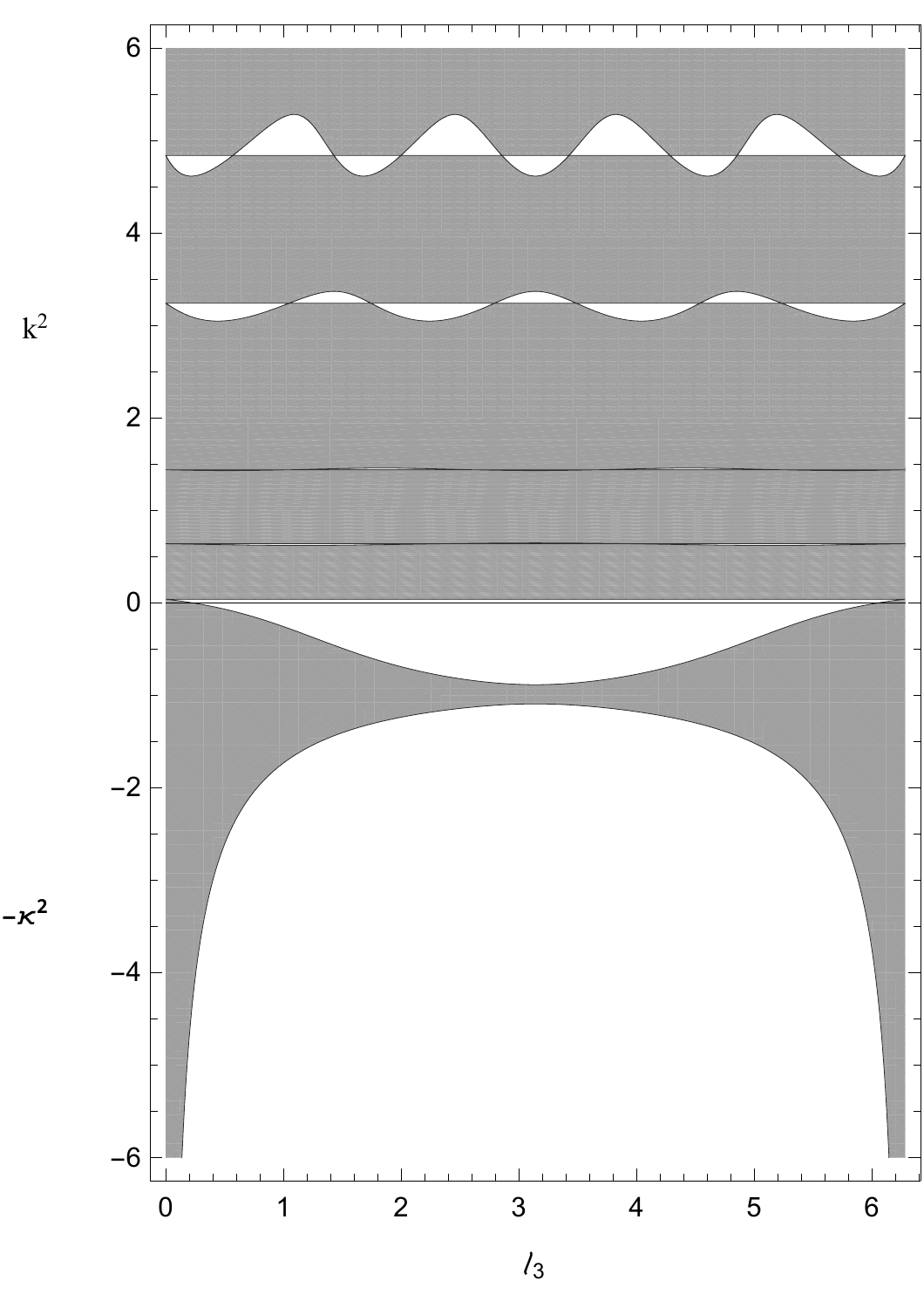}
\caption{ Spectrum of the model with $\ell_1=0$ in dependence on $\ell_3$ for $\ell=1$ and $A=\frac 15$ (the band condition \eqref{bandCon,a2b2c2} together with \eqref{SC,ell-1=0,abc} and \eqref{Neg-ell-1=0,abc}, for the positive and negative spectrum, respectively). Two spectral gaps are present in the interval $\frac12<k^2<\frac32$ but not well seen at the used scale. }
\label{figell10}
\end{figure}

\smallskip

The effect of the magnetic field is especially significant in the symmetric case, $\ell_3=\pi$, where we know from \cite{BET20} that the non-magnetic spectrum fills the whole interval $[0,\infty)$ without any gaps. This is no longer true in the presence of a field with a non-integer flux through the rings, $A\notin\mathbb{Z}$, where the spectral condition reduces to
\begin{equation}\label{ell-1=0,SC-Pos,symm}
\left( 2  \left(k^2 \ell ^2+1\right) \sin k\pi \, \cos A\pi +4 k \ell \, \sin A\pi \, \cos k\pi    \right) \cos\theta = 2 k \ell \, \sin 2A\pi+\left(k^2 \ell ^2+1\right) \sin 2k\pi ,
\end{equation}
which obviously implies that the spectrum has a band-and-gap structure, see Figs.~\ref{fig-symm-ell=2} and \ref{figell10}. It also means that \emph{the probability of belonging to the spectrum} is nontrivial, in contrast to the `loosely connected' chain of the previous section. To investigate the asymptotic behavior of the bands determining when an energy value belongs to the spectrum at high energies, we keep again the leading order term in \eqref{bandCon,a2b2c2} --~this time together with \eqref{SC,ell-1=0,abc}~-- and arrive at the simplified condition
\begin{equation}\label{highCon,ell_1=0}
 \sin(k-A)\pi\;\sin(A+k)\pi\;\sin k\ell_3\;\sin k(2\pi-\ell_3)  \geq 0,
\end{equation}
with the relative error $\mathcal{O}(k^{-2})$. Away of the narrow bands which may appear in pairs in the vicinity of the roots of the sine functions in \eqref{highCon,ell_1=0}, there are wide bands and the gaps between them which grow asymptotically with the band index but not at the same rate in general. We have been unable to find the probability \eqref{probsigma} in a closed form as it takes different values when $\tfrac{\ell_2}{\ell_3}\in\mathbb{Q}$; an example is illustrated in Fig.~\ref{fig,prob,rationalell10} for $A=\frac14$,
\begin{figure}[!htb]
\centering
\includegraphics[scale=1.05]{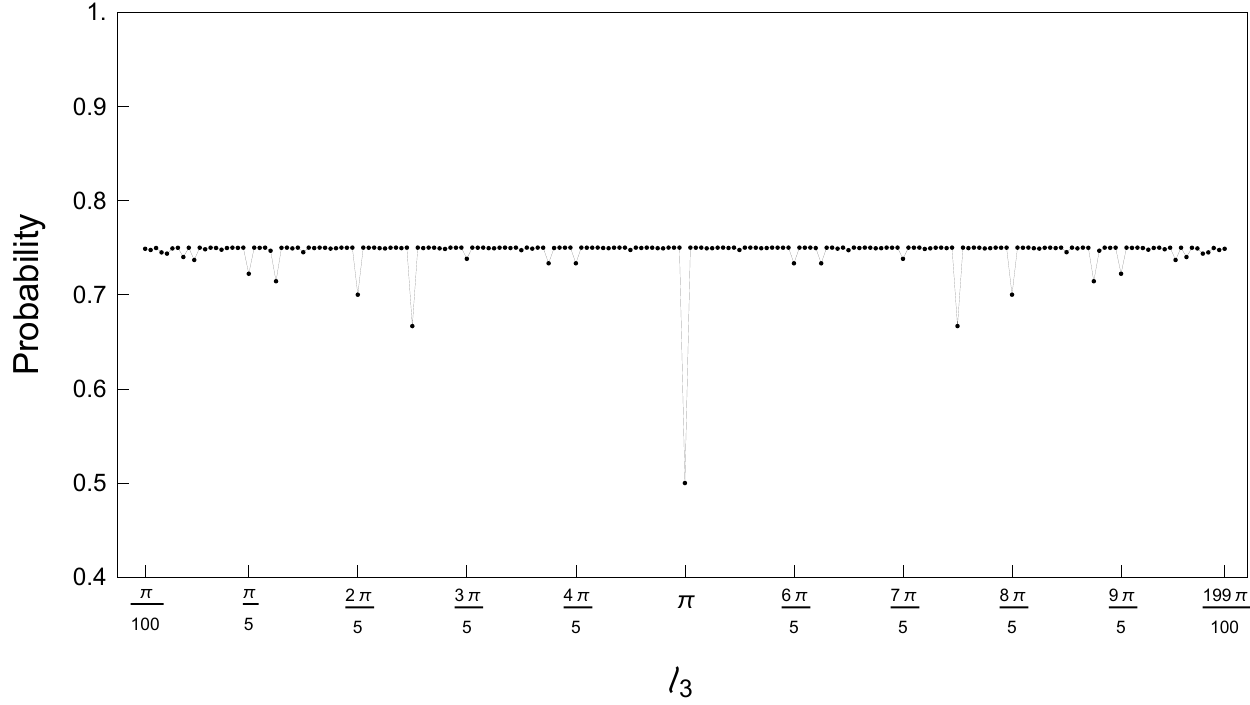}
\caption{The probability \eqref{probsigma} of the model with $\ell_1=0$, determined by \eqref{highCon,ell_1=0}, \emph{vs.} the edge length $\ell_3\in(0,2\pi)$ (or equivalently, $\ell_2$) for $A=\frac 14$. To make the results more visible, we have drawn here, and in the following similar figures, the gray lines joining the black points referring to the adjacent values of the probability.}
\label{fig,prob,rationalell10}
\end{figure}
note that for commensurate $\ell_2$ and $\ell_3$, the function on the left-hand side of \eqref{highCon,ell_1=0} is periodic, and accordingly, one can numerically calculate the probability by finding the roots of the function, and then evaluating the fraction of the period in which the function is non-negative. However, when $\ell_2$ and $\ell_3$ are incommensurate, the probability takes for any fixed $A$ a specific value, varying in the interval $(\frac12,\frac34)$, cf. Fig.~\ref{fig-prob-irrationalell10},
\begin{figure}[!htb]
\centering
\includegraphics[scale=1.3]{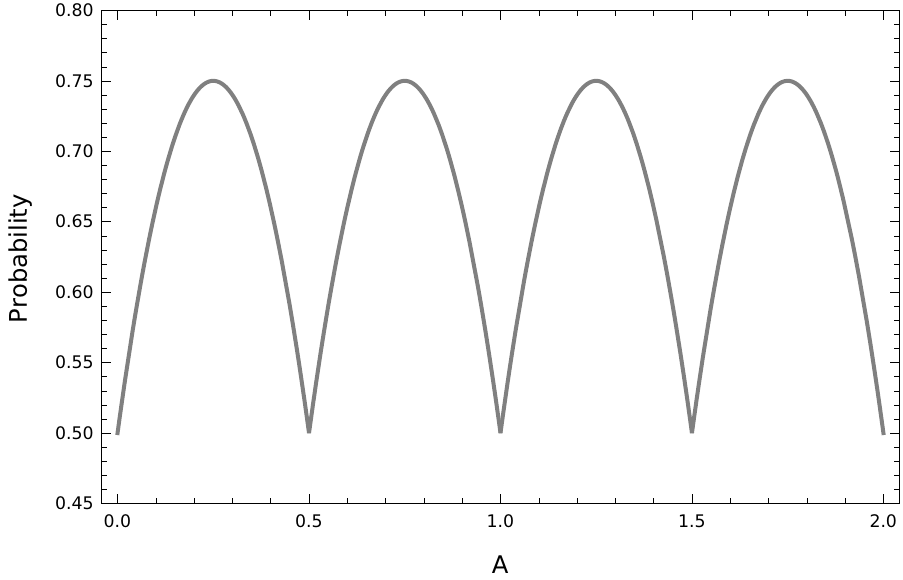}
\caption{The probability \eqref{prob,irrat,ell-1} of the model with $\ell_1=0$ \emph{vs.} the potential value $A$ assuming that $\ell_2$ and $\ell_3$ are incommensurate.}
\label{fig-prob-irrationalell10}
\end{figure}
\emph{independent of specific edge lengths} in accordance with universality property found in \cite{BB13}, as well as of the parameter $\ell$. Note that if $\frac{\ell_2}{\ell_3}\notin\mathbb{Q}$, the terms $k\ell_3$ and $k\pi$ in \eqref{highCon,ell_1=0} behave in the limit $k\rightarrow\infty$ as a pair of independent identically distributed random variables on $[0,2\pi)$, denoted as $(x,y)$. Consequently, one can calculate the probability by finding the fraction of the total area $4\pi^2$ in which
\begin{equation}\label{Prob,Axy,ell_1=0}
 \sin(y-A\pi)\;\sin(y+A\pi)\;\sin x\;\sin (2y-x)\geq 0
\end{equation}
holds. To begin with, we note that the region in which the condition \eqref{Prob,Axy,ell_1=0} is fulfilled is a union of polygons, cf. Fig.~\ref{figProb-ell10-3D};
\begin{figure}[!htb]
\centering
\includegraphics[scale=0.65]{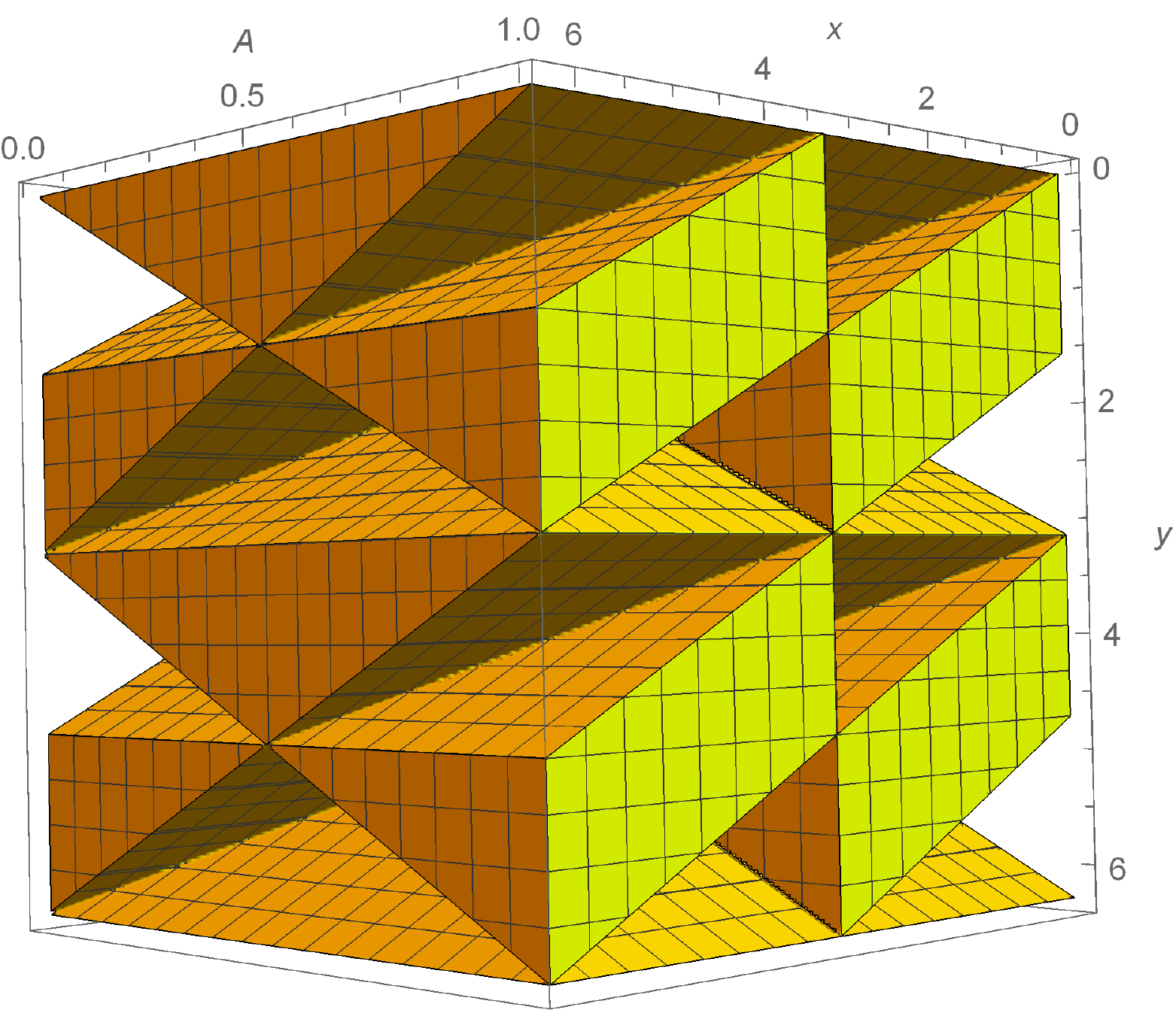}
\caption{The three-dimensional region in which the condition \eqref{Prob,Axy,ell_1=0} holds.}
\label{figProb-ell10-3D}
\end{figure}
on the other hand, one easily checks that the function on the left hand side is symmetric with respect to $x\leftrightarrow x-2\pi$, $y\leftrightarrow y-\pi$, $A\leftrightarrow 2n-A$, and $A\leftrightarrow 2\left(n-\frac12\right)-A$ with $n\in\mathbb{N}$. Accordingly, to find the region in which the function is non-negative, it suffices to find the non-negative part in the first octant only, i.e. in the intervals $x\in[0,\pi)$ and $y\in[0,\frac{\pi}2)$. This can be done analytically by computing the area of the trapezoids enclosed by the lines $y=\frac x2$, $y=A\pi$ with $A\in[0,\frac12]$, and the borders of the intervals; multiplying then the obtained result by eight, and dividing it next by the total area $4\pi^2$, we arrive at the probability
\begin{equation}\label{prob,irrat,ell-1}
P_{\sigma}(H)=\frac12+2A-4A^2 \quad \text{with} \quad (A \bmod\tfrac12)
\end{equation}
for all $A\in\mathbb{R}$, where we have taken into account the invariance with respect to $A$ mentioned above. We see that the second derivative of the probability function with respect to $A$ is $-8<0$, and accordingly, the probability \eqref{prob,irrat,ell-1} as a function of $A$ is strictly concave with its maximum and minima in the interval $A\in[0,\frac12]$ occurring at $A=\frac14$ and $A=0,\frac12$, respectively. Extending the claim to the whole domain $A\in\mathbb{R}$, the maxima and minima of the probability occur, respectively, at $A=\frac{2m-1}{4}$ and $A=0,\frac{2m-1}{2}$ with $m\in\mathbb{Z}$, taking the extremum values $\frac 34$ and $\frac 12$. This is consistent, of course, with the result in the non-magnetic case obtained in \cite{BET21} (for an asymmetric chain which is certainly true if $\ell_2$ and $\ell_3$ are incommensurate) but the same happens if the magnetic flux through the rings takes a half-integer value, cf. the plot of probability \eqref{prob,irrat,ell-1} versus $A$ given in Fig.~\ref{fig-prob-irrationalell10}.

It is also worth noting that for a rational number $\frac{\ell_2}{\ell_3}=\frac pq$ the universality does not hold, however, if $p,q$ are large coprime as a rational approximation of an irrational number, the probability \eqref{prob,irrat,ell-1} is close to the universal value. This is visible in Fig.~\ref{fig,prob,rationalell10} where for $\frac{\ell_2}{\ell_3}\in\mathbb{Q}$ with larger $p$ and $q$ and $A=\frac14$, the value is $\approx 0.749$,  very close to $\frac34$ of the incommensurate edge lengths situation.

\smallskip

It is also interesting to investigate the probability of belonging to the spectrum for the \emph{symmetric} chain. In this case, of course, makes no sense to speak about the universality in the sense of \cite{BB13} but the quantity shows the dependence of the spectral measure on the magnetic field; in addition, it is again independent of the parameter $\ell$ in \eqref{coupB}. The inequality \eqref{highCon,ell_1=0} for $\ell_3=\pi$ reduces to $\sin(k-A)\pi\;\sin(A+k)\pi\;\sin^2 k\pi \geq 0 $, and accordingly, it remains to calculate the probability that the expression
$$   h(k):=\sin(k-A)\pi\;\sin(A+k)\pi   $$
is non-negative for a randomly chosen value of $k$. The function $h(\cdot)$ is periodic with the period $1$; by a straightforward computation, one finds that the roots of the function in the period are
\begin{align*}
& k_1=\dfrac{1}{2\pi}\,\arccos\left(\cos2A\pi\right) ,\\
& k_2=1-\dfrac{1}{2\pi}\,\arccos\left(\cos2A\pi\right).
\end{align*}
On the other hand $h(0)=h(1)=-\sin^2 A\pi<0$, which means that the function is non-negative over the interval $(k_1,k_2)$, and consequently, the probability of belonging to the spectrum for all $A\in\mathbb{R}$ is obtained as
\begin{equation}\label{prob,irrat,ell-1,sym}
P_{\sigma}(H)=1-\dfrac{1}{\pi}\,\arccos\left(\cos2A\pi\right).
\end{equation}
Note that since the range of $\arccos$ function is contained in the interval $[0,\pi]$, the probability of belonging to the spectrum for the symmetric chain varies in the interval $[0,1]$, the value one corresponding to integer $A$, zero corresponds to half-integer values of $A$; in the latter case spectral bands may appear only in the vicinity of the points $k=n\in\mathbb{N}$. A plot of probability with respect to $A$ is illustrated in Fig.~\ref{fig,prob,sym,ell1=0}.
\begin{figure}[!htb]
\centering
\includegraphics[scale=1.3]{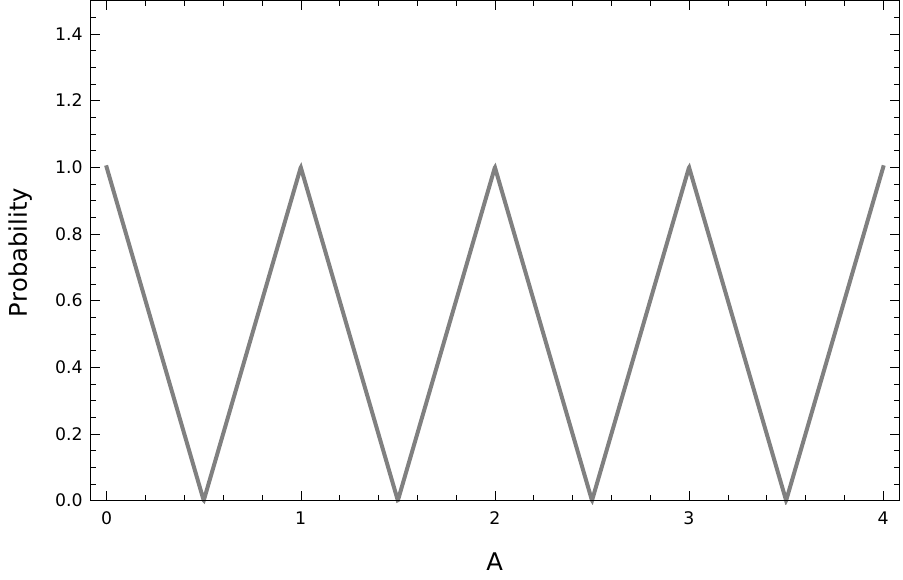}
\caption{The probability \eqref{prob,irrat,ell-1,sym} of the model $\ell_1=0$ versus the potential value $A$ for the symmetric chain $\ell_2=\ell_3=\pi$.}
\label{fig,prob,sym,ell1=0}
\end{figure}
Summarizing the obtained results for the probabilities of belonging to the (positive) spectrum, we have
\begin{equation*}
P_{\sigma}(H)= \left\{ \begin{array}{lcl} \dfrac12+2A-4A^2\quad (A \bmod \frac12) & \quad\dots\quad & \ell_3\neq\pi \,,\;\; \frac{\ell_2}{\ell_3}\notin\mathbb{Q} \\[.4em]
 1-\dfrac{1}{\pi}\,\arccos\left(\cos2A\pi\right) & \quad\dots\quad & \ell_3=\pi  \end{array} \right.
\end{equation*}
Comparing the two expressions we see the effect noted already in the non-magnetic case \cite{BET20, BET21}: in accordance with the result of \cite{BLS19} the model spectrum of the `loose-chain' Hamiltonian converges to that of the `tight' one as $\ell_1\to 0$, however, the limit is non-uniform with respect to the energy.

\subsection{Negative spectrum}
\label{sect:Mod-ell1=0,neg}
As for the `loose' chain, the negative spectrum corresponds to $k=i\kappa$ with $\kappa>0$; in this case, the functions $a$, $b$, and $c$ in the spectral condition \eqref{formul,abc} are as follows,
\begin{align}\label{Neg-ell-1=0,abc}
a=& \; (1-\kappa ^2 \ell ^2 ) \left(\cos A \ell_3 \, \sinh \kappa(2 \pi -\ell_3 ) +\cos A (2 \pi -\ell_3 ) \, \sinh \kappa\ell_3  \right)      \notag\\ \nonumber
& + \; 2 \kappa  \ell \left( \sin A (2 \pi -\ell_3 )\, \cosh  \kappa\ell_3  +  \sin A\ell_3 \,\cosh \kappa(2 \pi -\ell_3 )    \right) , \\[10pt] \nonumber
b=& \;  (\kappa ^2 \ell ^2-1) \left(\sin A (2 \pi -\ell_3 )\, \sinh \kappa\ell_3 -\sin A \ell_3 \, \sinh \kappa(2 \pi -\ell_3 ) \right)         \notag \\
& + \; 2 \kappa\ell \left( \cos A (2 \pi-\ell_3 )\, \cosh \kappa\ell_3 -  \cos A\ell_3 \, \cosh \kappa(2 \pi -\ell_3 ) \right),  \\[10pt]
c=& \;  2 \kappa  \ell  \sin 2A\pi + \left(1-\kappa ^2 \ell^2\right)\sinh 2\kappa\pi\notag
\end{align}
and the spectrum is given by the band condition \eqref{bandCon,a2b2c2} together with \eqref{Neg-ell-1=0,abc}. Similarly to the `loose' chain case, there is no flat band, however, here it holds in general only if $A\notin\mathbb{Z}$. To check the last claim, recall the symmetric non-magnetic case \cite{BET21} where we have a single isolated negative eigenvalue $-\ell^{-2}$ of infinite multiplicity. Here we have for $A\notin\mathbb{Z}$ in the symmetric situation a single continuous band, illustrated on Figs.~\ref{fig-symm-ell=2} and \ref{figell10}; the spectral condition in this case reads explicitly
\begin{equation*}\label{ell-1=0,SC-Neg,symm}
\left( 2  \left(1-\kappa^2 \ell ^2\right) \sinh \kappa\pi \, \cos A\pi +4 \kappa \ell \, \sin A\pi \, \cosh \kappa\pi    \right) \cos\theta = 2 \kappa \ell \, \sin 2A\pi+\left(1-\kappa^2 \ell ^2\right) \sinh 2\kappa\pi \,,
\end{equation*}
and accordingly, a number $-\kappa^2$ belongs to a spectral band provided
\begin{equation}\label{ell1-0,sym,neg}
\abs{\, 2  \left(1-\kappa^2 \ell ^2\right) \sinh \kappa\pi \, \cos A\pi +4 \kappa \ell \, \sin A\pi \, \cosh \kappa\pi } \,  \geq \; \abs{\, 2 \kappa \ell \, \sin 2A\pi+\left(1-\kappa^2 \ell ^2\right) \sinh 2\kappa\pi }.
\end{equation}
Note that since the elementary cell in this case contains a single vertex of degree four, the corresponding matrix $U$ has only one eigenvalue in the upper complex halfplane, and consequently, the negative spectrum cannot have more than a single band in accordance with Theorem 2.6 in \cite{BET21}. We are also able to localize it: for $\kappa=\ell^{-1}$ the band condition \eqref{ell1-0,sym,neg} reduces to $\left\lvert \cosh \frac{\pi}{\ell} \right\lvert \geq \left\lvert \cos A\pi\right\lvert$, implying that the energy $-\ell^{-2}$ always belong to the band. It is also clear from \eqref{ell1-0,sym,neg} that the band shrinks to the eigenvalue $-\ell^{-2}$ as $A$ approaches an integer value.

\section{The case $\ell_2=0$}
\label{sect:Mod-ell3=0}

Let us now consider the second degenerate situation when the connecting links are present but the two contacts at a ring merge to a single point so that $\ell_2=0$ and $\ell_3=2\pi$. As before, we will discuss the positive and negative spectrum separately.

\subsection{Positive spectrum}
\label{sect:Mod-ell3=0,pos}
In this case, $a$, $b$, and $c$ in the spectral condition \eqref{formul,abc} are as follows,
\begin{align}\label{Pos-ell3=0abc}
a=& \;   2 k \ell \,  \sin 2 A\pi +\left(k^2 \ell ^2+1\right) \sin 2k\pi     , \\[4pt] \nonumber
b=& \;  2 k \ell  \, \left(  \cos 2A\pi -\cos 2k\pi \right)    ,     \\[4pt]
c=& \;  2 k \ell \, \sin 2A\pi  \, \cos k \ell _1-\left(k^2 \ell ^2+1\right) \left(\cos 2A\pi \, \sin k\ell _1-\sin k \left(\ell _1+2 \pi \right)\right)     \nonumber  .
\end{align}
As before, we first consider the special case $A-\frac 12\in\mathbb{Z}$ where the spectral condition reduces to
\begin{equation}
\cos k\pi \left(  \left(k^2 \ell ^2+1\right) \sin k\pi \,\cos\theta   - 2\kappa\ell\cos k\pi \,   \sin\theta  - \left(k^2 \ell ^2+1\right) \sin k(\ell_1 +\pi )   \right)=0,
\end{equation}
indicating again that flat bands occur at $k=n-\frac 12$ with $n\in\mathbb{N}$ for all values of $\ell_1>0$. Away from these flat bands, the spectrum is absolutely continuous having a band-and-gap structure where the spectral bands satisfy the condition
$$   4 k^2 \ell ^2+\left(k^2 \ell ^2+1\right)^2 \cos 2k(\ell_1 +\pi )-\left(k^2 \ell
   ^2-1\right)^2 \cos 2k\pi \geq 0 .$$
In the generic situation, $2A\notin\mathbb{Z}$, the energy $\ell^{-2}$ again always belongs to the spectrum provided $A+\ell^{-1}\in\mathbb{Z}$. This can be easily checked by manipulating the equations $a=0$ and $b=0$ which yields $\frac{4 k^2 \ell ^2}{(k^2 \ell ^2+1)^2}=1$ as a necessary condition; it is fulfilled for $k=\ell^{-1}$. Inspecting then values of $a$, $b$, and $c$ at this point, we check easily that they all vanish for $A+\ell^{-1}\in\mathbb{Z}$. The rest of the spectrum is absolutely continuous having a band-gap structure; its points are given by the condition \eqref{bandCon,a2b2c2} together with \eqref{Pos-ell3=0abc}. The band-gap pattern for particular values of parameters is illustrated in Fig.~\ref{figell30}.
\begin{figure}[!htb]
\centering
\includegraphics[scale=1.1]{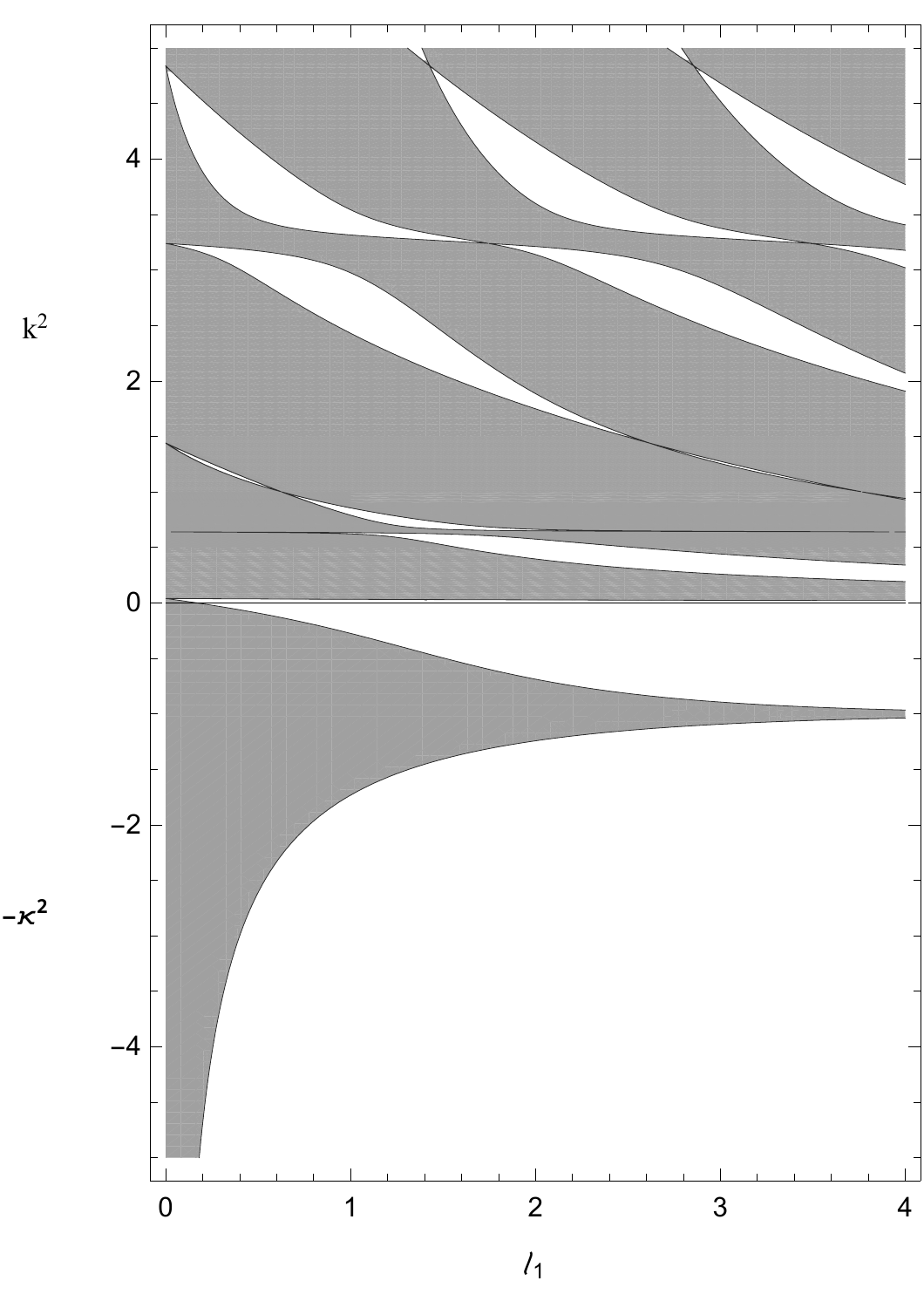}
\caption{ Spectrum of the model with $\ell_2=0$ in dependence on $\ell_1$ for $\ell=1$ and $A=\frac 15$ (the band condition \eqref{bandCon,a2b2c2} together with \eqref{Pos-ell3=0abc} and \eqref{Neg-ell-3=0,abc}, for the positive and negative part, respectively).}
\label{figell30}
\end{figure}
In particular, in the high-energy regime, $k\rightarrow\infty$, the band condition can be rewritten as
\begin{equation}\label{highCon,ell_3=0}
 64\, k^4 \ell^4  \left( \sin ^2 2k\pi-\left( \sin k(\ell_1 +2 \pi )-\cos 2A\pi \,\sin k\ell_1  \right)^2  \right)+\mathcal{O}(k^3) \geq 0.
\end{equation}
This determines the asymptotic behavior of the bands and, in particular, the probability \eqref{probsigma} that a randomly chosen energy belongs to the spectrum. If $\ell_1$ is a rational multiple of $\pi$, the function in the large brackets in \eqref{highCon,ell_3=0} is periodic, and accordingly, one can compute the probability numerically by finding the roots of the function determining the fraction of the period where the function is non-negative.
It is difficult to find the probability \eqref{probsigma} in a closed form, but it depends both on $\ell_1$ and $A$ as can be seen in Figs.~\ref{fig-prob-ell-3=0-A=1-5om} and \ref{fig-probe-ell-3=0-ell1=pi-5om}.
\begin{figure}[!htb]
\centering
\includegraphics[scale=1]{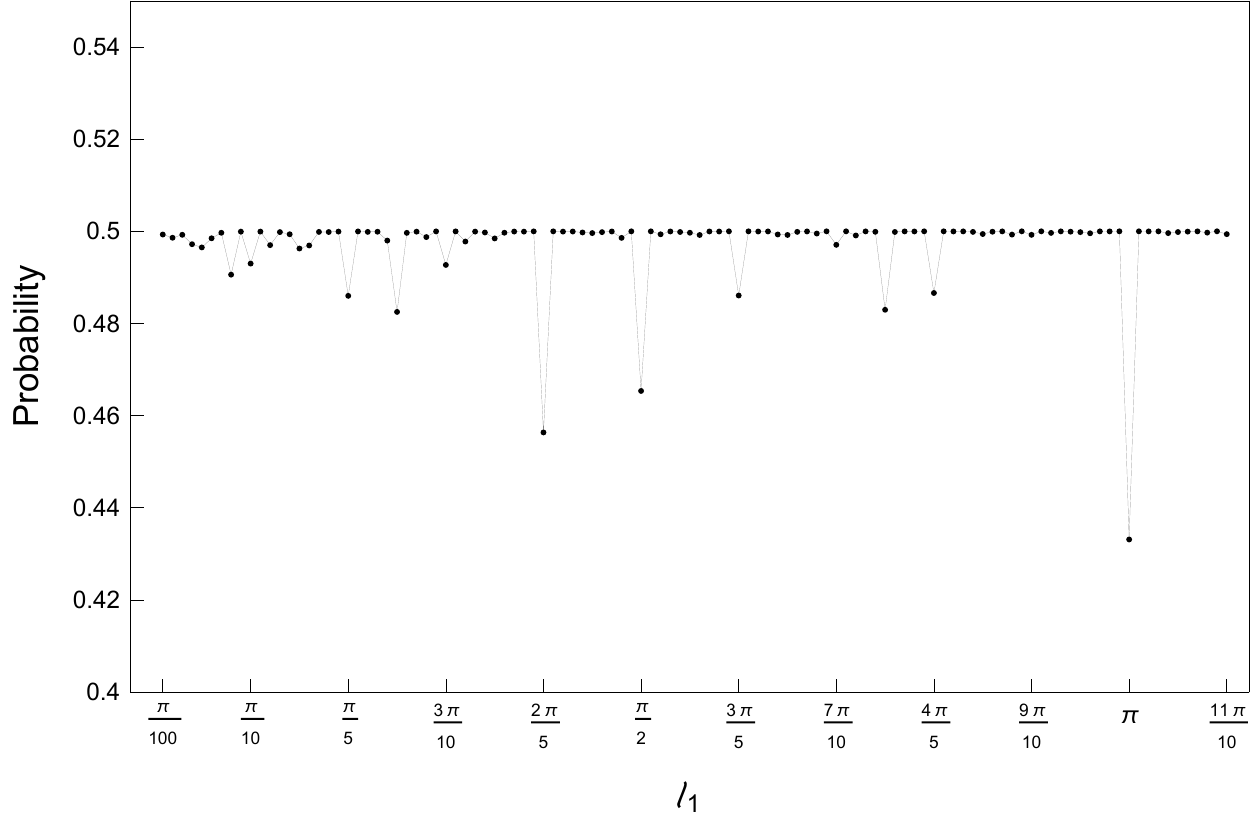}
\caption{The probability \eqref{probsigma} of the model with $\ell_2=0$, obtained by the condition \eqref{highCon,ell_3=0}, \emph{vs.} the edge length $\ell_1$ being a rational multiple of $\pi$ and $A=\frac 15$. }
\label{fig-prob-ell-3=0-A=1-5om}
\end{figure}
\begin{figure}[!htb]
\centering
\includegraphics[scale=1]{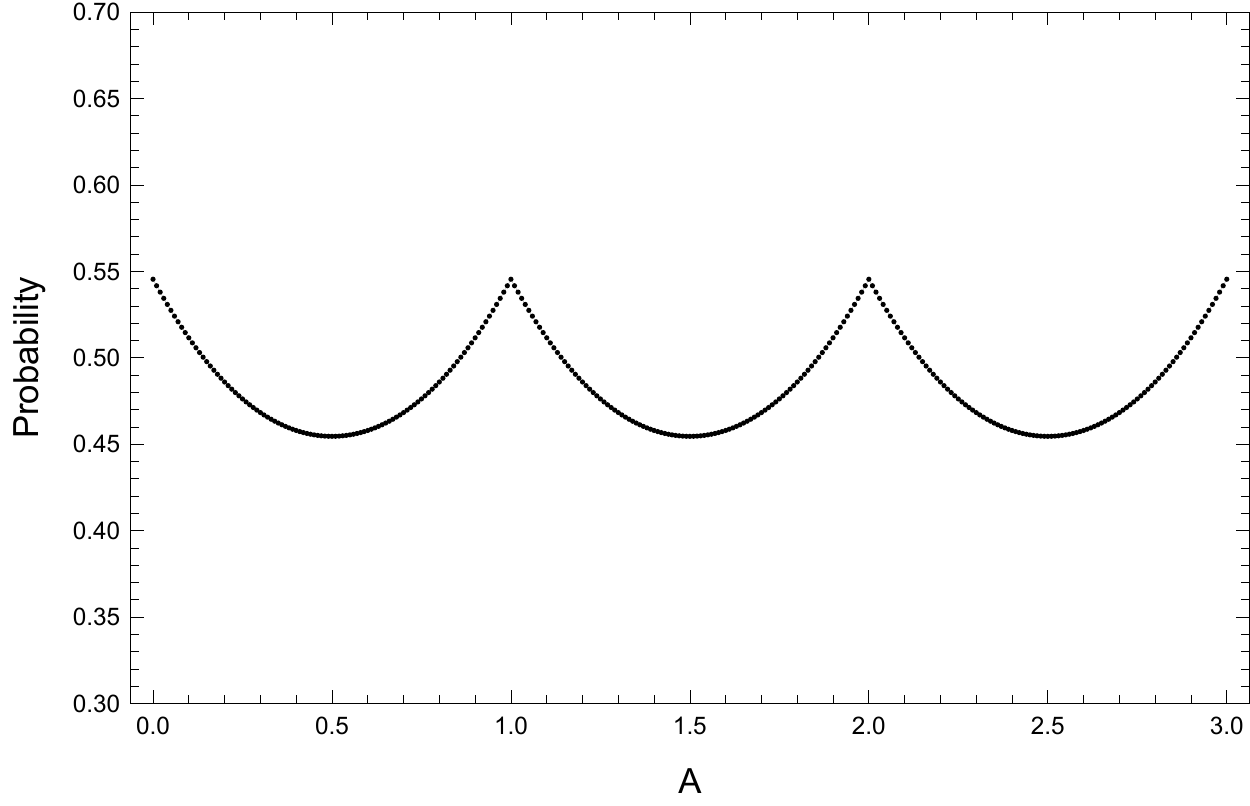}
\caption{The probability \eqref{probsigma} of the model with $\ell_2=0$, obtained by the condition \eqref{highCon,ell_3=0}, \emph{vs.} the potential value $A\in[0,3]$ for $\ell_1=\frac{\pi}{5}$. }
\label{fig-probe-ell-3=0-ell1=pi-5om}
\end{figure}
On the other hand, if $\ell_1$ and $\ell_3=2\pi$ are incommensurate, the probability \eqref{probsigma} equals $\frac 12$ as in the non-magnetic case \cite{BET21} which means that the Band-Berkolaiko universality holds again, and moreover, that the probability is independent of $A$. To see this, note that with $k$ running over positive numbers, one can regard $x:=k\ell_1\,(\bmod\, 2\pi)$ and $y:=k\pi\,(\bmod\, 2\pi)$ as a pair of independent identically distributed random variables on $[0,2\pi)$, and consequently, the sought probability is the fraction of the total area $4\pi^2$ in which
\begin{equation}\label{area-ell3=0}
\sin^2(2y)-\left( \sin(x+2y)-\cos2A\pi \,\sin x \right)^2 \geq 0
\end{equation}
holds. This non-negative region is again a union of irregular geometric shapes, cf. Fig.~\ref{figProb-ell30-3D};
\begin{figure}[!htb]
\centering
\includegraphics[scale=0.75]{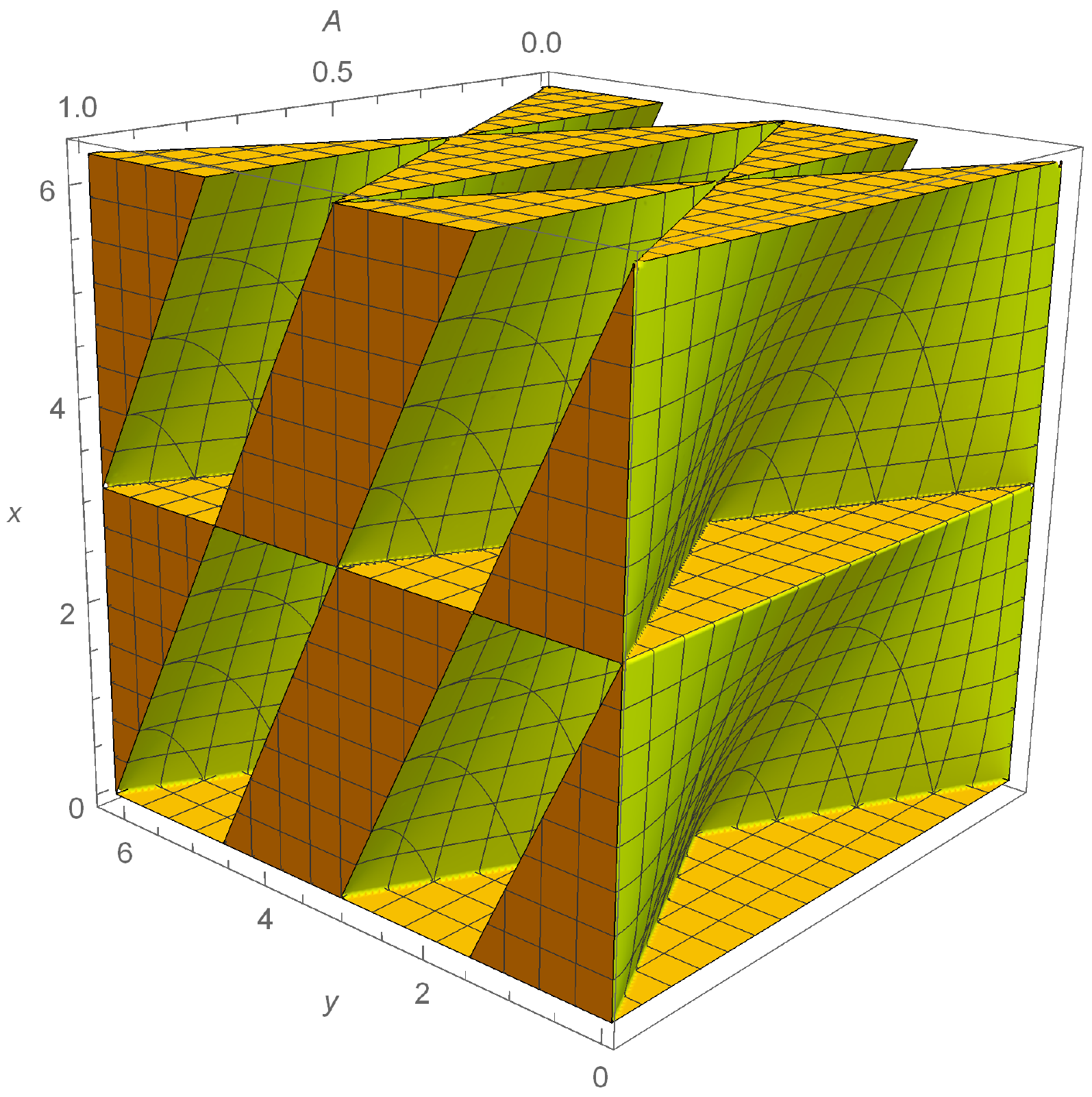}
\caption{The three-dimensional region in which the condition \eqref{area-ell3=0} holds.}
\label{figProb-ell30-3D}
\end{figure}
one can easily check that the function on the left-hand side of \eqref{area-ell3=0} is symmetric with respect to $x\leftrightarrow x-2\pi$ and $y\leftrightarrow y-\pi$, thus it again suffices to investigate the first octant only, i.e. the intervals $x\in[0,\pi)$ and $y\in[0,\frac{\pi}2)$. The inequality \eqref{area-ell3=0} can be rewritten in the form
\begin{equation}\label{area-ell3=0-abs}
\left\lvert \sin(x+2y)-\mathcal{A} \,\sin x \right\lvert \leq  \left\lvert\sin2y\right\lvert
\end{equation}
with $\mathcal{A}:=\cos2A\pi$. Taking then into account that all the trigonometric functions $\sin 2y$, $\sin \frac x2$ and $\cos \frac x2$ are positive over the specific intervals, one can check that \eqref{area-ell3=0-abs} is equivalent to the inequalities
$$ \mathcal{A}\,\sin \frac x2\,\leq \sin \Big(\frac x2 +2y\Big) \quad \text{and}\quad \mathcal{A}\,\cos \frac x2\,\geq \cos \Big(\frac x2 +2y\Big) .$$
This means that the sought area is enclosed between the curves
\begin{align*}
  y_2(x,\mathcal{A}) &= \frac{\pi}4-\frac x4+\frac12\arccos\Big(\mathcal{A}\,\sin \frac x2\Big)  , \\[5pt]
  y_1(x,\mathcal{A}) &=  -\frac x4+\frac12\,\arccos\Big(\mathcal{A}\,\cos \frac x2\Big).
\end{align*}
Then it is easy to check that
\begin{align*}
 \frac{d}{d\mathcal{A}}\int_{0}^{\pi} \left(y_2(x,\mathcal{A})-y_1(x,\mathcal{A}) \right)\,dx &=-\frac12 \int_{0}^{\pi} \frac{\sin\frac x2}{\sqrt{1-\mathcal{A}^2\,\sin^2\frac x2}} \,dx +\frac12\int_{0}^{\pi} \frac{\cos\frac x2}{\sqrt{1-\mathcal{A}^2\,\cos^2\frac x2}} \,dx \\
&= -\frac1{\mathcal{A}}\,\arctanh\mathcal{A}+\frac1{\mathcal{A}}\,\arctanh\mathcal{A}=0.
\end{align*}
This proves our claim that the value of the magnetic potential $A$ does not affect the probability and it suffices to compute the integral for $A=0$ which gives $\frac{\pi^2}{4}$; multiplying then the result by eight, and dividing by $4\pi^2$, one gets $\frac12$ as in \cite{BET21}. Note also that approximating such an irrational edge lengths relation by rationals, we approach the $\frac12$ as Fig.~\ref{fig-prob-ell-3=0-A=1-5om} shows.

\subsection{Negative spectrum}
\label{sect:Mod-ell3=0,neg}
To find the negative part of the spectrum, one has to substitute $k=i \kappa$ into \eqref{Pos-ell3=0abc} which gives
\begin{equation}
\begin{aligned}\label{Neg-ell-3=0,abc}
a=& \;    2 \kappa\ell \, \sin 2A\pi+ \left(1-\kappa ^2 \ell^2\right) \sinh 2 \kappa \pi    , \\[5pt]
b=& \;     2 \kappa\ell  \left(   \cos 2A\pi -\cosh 2 \kappa\pi    \right)  ,  \\[5pt]
c=& \;    \left(\kappa ^2 \ell ^2-1\right) \left(  \cos 2A\pi \, \sinh \kappa\ell_1 -\sinh\kappa(\ell_1 +2 \pi )     \right)+2 \kappa \ell  \, \sin 2A\pi  \cosh \kappa\ell_1   .
\end{aligned}
\end{equation}
There is again no flat band; the spectrum is given by the band condition \eqref{bandCon,a2b2c2} together with \eqref{Neg-ell-3=0,abc}. As in the previous case, the elementary cell contains a single vertex of degree four, hence the spectrum cannot have more than a single negative band in accordance with Theorem 2.6 of \cite{BET21}. Furthermore, mimicking the argument used for the general model, we conclude this band shrinks to a point in the limit $\ell_1\rightarrow\infty$. In that case, the function $f(\ell,\ell_3,A;\kappa)$ in \eqref{large,ell-1,gen} reads
\begin{equation}\label{large,f,ell3-0}
f(\ell,2\pi,A;\kappa):=4  (\kappa ^2 \ell ^2-1 ) \big( \cos 2A\pi-e^{2\kappa\pi} \big)+8 \kappa\ell  \sin 2A\pi,
\end{equation}
which allows one to check easily that for $A\in\mathbb{Z}$ and $A-\frac12\in\mathbb{Z}$, the band shrinks to the energy $-\ell^{-2}$. The former case has been investigated in \cite{BET21}; to estimate the width of the shrinking band for $A-\frac12\in\mathbb{Z}$, we set $\kappa=\ell^{-1}+\delta$; substituting into the spectral condition and solving the resulting equation for $\delta$, we obtain the following asymptotic expressions for the energy and width of the band
\begin{align*}
 -\kappa^{2}&= -\frac{1}{\ell^{2}}-\frac{2}{\ell^{2}}\,\left(1+\e^{-\frac{2\pi}{\ell}}\right)\sin\theta \; \e^{-\frac{\ell_1}{\ell }}+\mathcal{O}(\e^{-2\frac{\ell_1}{\ell }})    ,   \\[7pt]
 \Delta E &= \frac{4}{\ell^{2}}\,\left(1+\e^{-\frac{2\pi}{\ell}}\right) \,\e^{-\frac{\ell_1}{\ell }}+\mathcal{O}(\e^{-2\frac{ \ell_1}{\ell }}) .
\end{align*}
For other values of $A$, the shrinking is still exponential, however, the limiting point, determined by the condition $f(\ell,2\pi,A;\kappa)=0$, is in general different from $\ell^{-2}$.

\subsection*{Acknowledgements}

P.E. was supported by the Czech Science Foundation within the project 21-07129S and by the EU project CZ.02.1.01/0.0/0.0/16\textunderscore 019/0000778. M.B. and J.L. were supported by the Czech Science Foundation within the project 22-18739S. M.B.’s work was also supported by the Internal Postdoc Project UHK for years 2021-2022.

\end{document}